\newtheorem{theorem}{Theorem}
\newtheorem{lemma}[theorem]{Lemma}
\newtheorem{corollary}[theorem]{Corollary}
\newcommand{\Aviad}[1]{}
\newcommand{\Jan}[1]{}
\theoremstyle{remark}
\theoremstyle{definition}
\newtheorem{definition}[theorem]{Definition}
\crefname{theorem}{Theorem}{Theorems}
\crefname{lemma}{Lemma}{Lemmas}
\crefname{proposition}{Proposition}{Propositions}
\crefname{corollary}{Corollary}{Corollaries}
\crefname{fact}{Fact}{Facts}
\crefname{definition}{Definition}{Definitions}
\crefname{remark}{Remark}{Remarks}
\crefname{section}{Section}{Sections}
\crefname{appendix}{Appendix}{Appendices}
\crefname{algorithm}{Algorithm}{Algorithms}
\newcommand{\RR}{{\mathbb R}}
\newcommand{\cF}{\mathcal{F}}
\newcommand{\cH}{\mathcal{H}}
\newcommand{\cA}{\mathcal{A}}
\newcommand{\cG}{\mathcal{I}}
\newcommand{\cS}{\mathcal{S}}
\newcommand{\eps}{\epsilon}
\newcommand{\E}{{\mathbb E}}
\newcommand{\vect}[1]{\mathbf{#1}}
\newcommand{\bx}{\vect{x}}
\newcommand{\bp}{\vect{p}}
\newcommand{\by}{\vect{y}}
\newcommand{\bz}{\vect{z}}
\newcommand{\NSW}{\mbox{\sf NSW}}
\newcommand{\med}{\mbox{\sf med}}
\def\b1{{\bf 1}}
\title{A constant factor approximation for Nash social welfare with subadditive valuations}
\author{Shahar Dobzinski\thanks{Weizmann Institute of Science}, Wenzheng Li\thanks{Stanford University}, Aviad Rubinstein\thanks{Stanford University}, Jan Vondr\'{a}k\thanks{Stanford University}}
\begin{document} 
\maketitle \thispagestyle{empty} 

\begin{abstract}
We present a constant-factor approximation algorithm for the Nash social welfare maximization problem with subadditive valuations accessible via  demand queries. More generally, we propose a template for NSW optimization by solving a configuration-type LP and using a rounding procedure for (utilitarian) social welfare as a blackbox, which 
could be applicable to other variants of the problem.
\end{abstract}



\section{Introduction}

We consider the problem of allocating a set $\cG$ of $m$ indivisible items to a set $\cA$ of $n$ agents, where each agent $i\in\cA$ has a valuation function $v_i: 2^{\cG}\to \RR_{\ge0}$. The Nash social welfare (NSW) problem is to find an allocation ${\mathcal{S}}=(S_i)_{i\in \cA}$ that maximizes the geometric mean of the agents' valuations,
\[ \NSW({\mathcal{S}}) = \left( \prod_{i\in \cA} v_i(S_i) \right)^{1/|\cA|}.\]
For $\alpha \ge 1$, an \emph{$\alpha$-approximate solution} to the NSW problem is an allocation $\cS$ with $\NSW(\cS)\ge \mathrm{OPT}/\alpha$, where $\mathrm{OPT}$ denotes the optimum value of the NSW-maximization problem.

Allocating resources to agents in a fair and efficient manner
is a fundamental problem in computer science, economics, and social choice theory, with substantial prior work \cite{Barbanel,Brams1996,BrandtCELP16,Moulin2004,Robertson1998,R16,Young1995}.
A common measure of efficiency is \emph{utilitarian social welfare}, i.e., the sum of the utilities  $\sum_{i\in \cA} v_i(S_i)$ for an allocation $(S_i)_{i\in \cA}$. This objective does not take fairness into account, as all items could be allocated to one agent whose valuation function dominates the others.
In order to incorporate fairness, various notions have been considered, ranging from envy-freeness, proportional fairness to various modifications of the objective function. At the end of the spectrum opposite to utilitarian social welfare, one can consider the max-min objective, $\min_{i\in\cA} v_i(S_i)$, also known as the \emph{Santa Claus problem} \cite{bansal2006santa}. This objective is somewhat extreme in considering only the happiness of the least happy agent.

Nash social welfare provides a balanced tradeoff between the requirements of fairness and efficiency.  It has been introduced independently in several contexts: as a discrete variant of the Nash bargaining game~\cite{Kaneko1979,nash1950bargaining}; as a notion of competitive equilibrium with equal incomes in economics~\cite{varian1973equity}; and also as a proportional fairness notion in networking~\cite{kelly1997charging}. 
Nash social welfare has several desirable features, for example invariance under scaling of the valuation functions $v_i$ by independent factors $\lambda_i$, i.e., each agent can express their preference in a ``different currency'' without changing the optimization problem (see~\cite{Moulin2004} for additional characteristics).

\subsection{Preliminaries}

The difficulty of optimizing Nash social welfare depends naturally on the class of valuation functions that we want to deal with, and how they are accessible. Various classes of valuations have been considered in the literature. For the sake of this paper, let us introduce three basic classes of valuations, and two oracle models.

\paragraph{Classes of valuation functions}
A set function $v:\, 2^{\cG} \to \RR$  is \emph{monotone} if $v(S)\le v(T)$ whenever $S\subseteq T$. A monotone set function with $v(\emptyset) = 0$ is also called a \emph{valuation function} or simply {\em valuation}.

A valuation $v:\, 2^{\cG} \to \RR$ is \emph{additive} if $v(S) = \sum_{j \in S} w_j$ for nonnegative weights $w_j$.

A valuation $v:\, 2^{\cG} \to \RR$ is \emph{submodular} if
$v(S) + v(T) \ge v(S\cap T) + v(S\cup T)\, \quad  \forall S, T \subseteq   \cG$.

A valuation $v:\, 2^{\cG} \to \RR$ is \emph{fractionally subadditive} (or {\em XOS}) if
$v(S) = \max_{i \in \cG} \sum_{j \in S} w_{ij}$ for nonnegative weights $w_{ij}$.

A valuation $v:\, 2^{\cG} \to \RR$ is \emph{subadditive} if  
$v(S) + v(T) \ge v(S\cup T)\, \quad  \forall S, T \subseteq   \cG$.

We remark that these classes form a chain of inclusions: additive valuations are submodular, submodular valuations are XOS, and XOS valuations are subadditive.

\paragraph{Oracle access.}
Note that additive valuations can be presented explicitly on the input. However, for more general classes of valuations, we need to resort to oracle access, since presenting a valuation explicitly would take an exponential amount of space. Three types of oracles to access valuation functions have been commonly considered in the literature.
\begin{itemize}
    \item {\em Value oracle:}  Given a set $S \subseteq \cG$, return the value $v(S)$.
    \item {\em Demand oracle:} Given prices $(p_j: j \in \cG)$, return a set $S$ maximizing $v(S) - \sum_{j \in S} p_j$.
    \item {\em XOS oracle} (for an XOS valuation $v$): Given a set $S$, return an additive function $a$ from the XOS representation of $v$ such that $v(S) = a(S)$.
\end{itemize}

\subsection{Prior work}


The Nash social welfare problem is NP-hard already in the case of two agents with identical additive valuations, by a reduction from the Subset-Sum problem. For multiple agents, it is NP-hard to approximate within a factor better than $0.936$ for additive valuations~\cite{garg2017satiation}, and better than $1-1/e \simeq 0.632$ for submodular valuations~\cite{GargKK20}.

The first constant-factor approximation algorithm for additive valuations, with the factor of $1/(2e^{1/e}) \approx 0.346$, was given by Cole and Gkatzelis~\cite{cole2015approximating} using a continuous relaxation based on a particular market equilibrium concept. Later,~\cite{cole2017convex} improved the analysis of this algorithm to achieve the factor of $1/2$. 
Anari, Oveis Gharan, Saberi, and Singh~\cite{anari2017nash} used a convex relaxation that relies on properties of real stable polynomials, to give an elegant analysis of an algorithm that gives a factor of $1/e$. The current best factor is $1/e^{1/e}-\eps \simeq 0.692$ by Barman, Krishnamurthy, and Vaish~\cite{barman2018finding}; the algorithm uses a different market equilibrium based approach. Note also that this factor is above $1-1/e$, hence separating the additive and submodular settings.

Constant-factor approximations have been extended to some classes beyond additive functions: capped-additive~\cite{garg2018approximating}, separable piecewise-linear concave (SPLC)~\cite{anari2018nash}, and their common generalization, capped-SPLC~\cite{ChaudhuryCGGHM18} valuations; the approximation factor for capped-SPLC valuations  matches the $1/e^{1/e}-\varepsilon$ factor for additive valuations. All these valuations are special classes of submodular ones. 
Subsequently, Li and Vondr\'ak~\cite{LiV20} designed an algorithm that estimates the optimal value within a factor of $\frac{(e-1)^2}{e^3} \simeq 0.147$ for a broad class of submodular valuations, such as coverage and summations of matroid rank functions, by extending the techniques of~\cite{anari2017nash} using real stable polynomials. However, this algorithm only estimates the optimum value but does not find a corresponding allocation in polynomial time.

 An important concenptual advance was presented in \cite{GHV20}, where a relaxation combining ideas from matching theory and convex optimization was shown to give a constant factor for the class of ``Rado valuations'' (containing weighted matroid rank functions and some related valuations). A crucial property of this approach is that it is quite modular and ended up leadin to multiple further advances.  In \cite{LiV21}, this approach was extended to provide a constant factor approximation algorithm for general submodular valuations, by replacing the concave extension of a valuation by the multilinear extension. The initial factor was rather small ($1/380$). Recently, a much simpler algorithm combining matching and local search was presented to give a $(1/4-\eps)$-approximation for submodular valuations \cite{GHLVV23}.

For the more general classes of XOS and subadditive 
valuations~\cite{BBKS20,chaudhury2021fair,GargKK20}, however, only polynomial approximation  factors were known until now, and this is the best one can hope for in the value oracle model~\cite{BBKS20}, for the same reasons that this is a barrier for the utilitarian social welfare problem~\cite{DNS10}. The best known approximation factors up to now have been $O(1/n)$ for subadditive valuations, and $O(1/n^{53/54})$ for XOS valuations if we are given access to both demand and XOS oracles~\cite{barman2021sublinear}. 
Constant factors for XOS valuations seemed quite out of reach prior to this work, and obtaining any sublinear factor for subadditive valuations was stated as an open problem in \cite{barman2021sublinear}.

\Aviad{Any of the previous papers that explicitly mention constant factor for XOS or subadditive as open problems?}
\Jan{Good point. The previous $n^{53/54}$ paper states obtaining any sublinear approximation for subadditive valuations as an open problem. I added a sentence above.}

\subsection{Our results and techniques}

\Aviad{It would be nice to at least include informal theorem statements in the intro}
\Jan{OK.}

Our main result is the following.

\paragraph{Theorem.} (informal)
{\em There is an algorithm, with polynomial time and demand queries to agents' valuations, that provides a constant factor approximation for the Nash Social Welfare with subadditive valuations.}

\Aviad{Is it true that we can run in polynomial time given demand queries?}
\Jan{Yes, I believe that's the result.}

As a special case, this also gives a constant-factor approximation for XOS valuations accessible via demand queries. (The algorithm for XOS valuations is somewhat simpler, as we discuss later in this section.)
This completes the picture in the sense that now we have a constant factor approximation for Nash social welfare in the main settings where one is known for utilitarian social welfare: for submodular valuations with value queries, and for subadditive valuations with demand queries. (It is known that a stronger oracle than a value oracle is required for XOS and subadditive valuations, even for utilitarian social welfare.)

The basis of our approach is the matching+relaxation paradigm which gave a constant-factor approximation for submodular valuations \cite{GHV20,LiV21}. Considering that the only constant-factor approximation for social welfare with subadditive valuations \cite{Feige08} is based on the "Configuration LP", which can be solved using demand queries, it is a natural idea to use a relaxation similar to the Configuration LP. A natural variant for Nash social welfare is the Eisenberg-Gale relaxation, using the logarithm of the concave extension of each agent's valuation.
We apply this relaxation on top of an initial matching, as in \cite{GHV20}.

The main obstacle with this approach is that natural rounding procedures for the Configuration LP do not satisfy any concentration properties. At a high level, without concentration some agents have higher value but some have lower value - leading to poor Nash social welfare even if we can maintain the expected utilitarian social welfare. More specifically, the first challenge is that, given a fractional solution $x_{i,S}$, we would ideally like to round it to an integral allocation by allocating set $S$ to agent $i$ with probability $x_{i,S}$. Even though this ideal rounding preserves each agent's expected value, the variance can be arbitrary, depending on the fractional solution $x_{i,S}$. 
Our first technical contribution is a procedure (see Lemma~\ref{lemma:splitting}) for finding a new feasible solution to the Configuration LP that, for each agent, has only high value subsets in its support (with the exception of agents who get most of their value from a single item --- this case is handled separately with the matching procedure). 
This procedure is rather simple in hindsight. At a high level, we can think of the fractional solution as a distribution of allocations for each agent. We want to discard the part of the distribution that corresponds to low value subsets; but this leaves uncovered probability mass. We re-cover this remaining mass by splitting high value subsets to ``stretch'' over more probability mass, while allocating each item with to agent $i$ with the same total probability. 



The next obstacle in rounding the Configuration LP is resolving ``contentions'': aka under the ideal rounding procedure described above, we may be trying to allocate the same item to multiple agents (even though in expectation it is only allocated to one agent). For XOS valuations, a simple independent randomized contention resolution scheme guarantees a constant factor approximation and also enjoys good concentration.  However the situation is more complicated for subadditive valuations. The only known constant-factor approximation for social welfare with subadditive valuations is a rather intricate rounding procedure of Feige \cite{Feige08}, which does not seem to satisfy any useful concentration properties. In any rounded solution, there might be agents who receive very low value, which hurts Nash social welfare, and hence we cannot use it directly.

Our solution is an iterated rounding procedure, where in each stage a certain fraction of agents is ``satisfied'' in the sense that they receive value comparable to their fractional value. We allocate the respective items to them, subject to random filtering which ensures that enough items are still left for the remaining agents. Then we recurse on the remaining agents and remaining items.
Still, some agents may receive a relatively small value, but we guarantee that the fraction of agents who receive low values is proportionally small, which means that the Nash social welfare overall is guaranteed to be good. As an example: if $OPT = (V_1 \cdots V_n)^{1/n}$, it suffices to solve for  an allocation where $\frac{n}{2}$ agents receive value at least $\frac12 V_i$, $\frac{n}{4}$ agents receive value at least $\frac14 V_i$, $\frac{n}{8}$ agents receive value at least $\frac18 V_i$, and so on. Then the approximation factor in terms of Nash Social Welfare turns out to be
$$ (1/2)^{1/2} (1/4)^{1/4} (1/8)^{1/8} (1/16)^{1/16} \cdots $$
and this infinite product converges to $1/4$ (we leave this as an exercise for the reader).

In order to guarantee the success of this rounding procedure, we need a concentration inequality (as in previous works). Concentration properties of subadditive functions are somewhat weaker and more difficult to prove that for submodular or XOS functions. Here we appeal to a powerful subadditive concentration inequality presented by Schechtman \cite{Schechtman99}, which is based on the ``$q$-point control inequality'' of Talagrand \cite{Talagrand89,Talagrand95}.

\subsubsection*{Reducing Nash Welfare to Rounding of Configuration LP}

Technically, we prove a reduction theorem (Theorem~\ref{thm:reduction}) which shows that to achieve a constant factor approximation for Nash social welfare, is it sufficient to implement efficient subroutines for two subproblems: (1) finding a solution of the Configuration LP satisfying a certain additional property (which happens to be satisfied for example by an optimal solution of the Eisenberg-Gale relaxation \cite{GHV20}, or the continuous greedy algorithm for the log-multilinear relaxation \cite{LiV21}), and (2) rounding a fractional solution of the Configuration LP while losing only a constant factor with respect to {\em utilitarian social welfare}. The latter problem is relatively easy for XOS valuations, but non-trivial for subadditive valuations. Fortunately, a factor $1/2$ rounding procedure is known due to Feige's work on welfare maximization with subadditive bidders \cite{Feige08}, which we use here as a blackbox. 

We remark that the constant factors lost in various stages of our proof are rather large and lead to a final approximation factor of $\sim 1/375,000$ for the Nash social welfare problem with subadditive valuations. One may hope that as in the case of submodular valuations, an initially large constant factor can be eventually improved to a ``practical one''.

\paragraph{Paper organization.}
In Section~\ref{sec:reduction}, we present the main technical result, which is a reduction of Nash social welfare to a certain relaxation solver and a rounding procedure for the Configuration LP. In Section~\ref{sec:subadditive}, we show how this implies a constant factor approximation algorithm for Nash social welfare with subadditive valuations.

We defer some components of the algorithm which are similar to earlier work to the appendices: Solving the relaxation and proving the required guarantees (Appendix~\ref{app:solving-log-concave}), the rematching lemmas (Appendix~\ref{app:rematching}), and concentration of subadditive functions (Appendix~\ref{app:concentration}).

\section{Optimizing NSW via relaxation and rounding for social welfare}
\label{sec:reduction}

Here we describe our general approach which allows us to derive algorithms for NSW optimization in several settings.
At a high-level, we reduce the NSW optimization to finding a certain solution for the "Configuration LP" (for social welfare optimization), and having a rounding procedure for the Configuration LP, again with respect to social welfare.

Let us define the Configuration LP:

\begin{align*}
\tag{Configuration LP}
\label{eq:config-LP}
\max \quad & \sum_{i\in \cA} \sum_{S \subseteq \cG} v_i(S) x_{i,S} \\
& \quad \sum_{i \in \cA} \sum_{S \subseteq \cG: j \in S} x_{i,S} \le 1 && \forall j \in \cG \\
& \quad \sum_{S \subseteq \cG} x_{i,S} = 1 && \forall i \in \cA \\
& \quad x_{i,S} \ge 0 &&  \forall i \in \cA, S \subseteq \cG \\
\end{align*}

Equivalently, this can be written as 
\begin{align*}
\max \quad & \sum_{i\in \cA} v_i^+(\bx_i) \\
& \quad \sum_{i \in \cA} x_{ij} \le 1 && \forall j \in \cG \\
& \quad x_{ij} \ge 0 &&  \forall i \in \cA, j \in \cG \\
\end{align*}

where as before,
\begin{align*}
\tag{Concave Extension}
 v_i^+(\bx_i) = & \max \sum_{S\subset \cG} v_i(S) x_{i,S} : \\
 &  \sum_{S \subset \cG:j\in S} x_{i,S} \le x_{ij} && \forall j\in \cG \\
 & \ \ \sum_{S\subset\cG} x_{i,S} = 1 \\
 & \ \  x_{i, S} \ge 0 &&  \forall S\subset\cG
\end{align*}

The following is our main reduction theorem, which provides an algorithm for Nash social welfare, given two procedure that we call the {\bf Relaxation Solver} and {\bf Rounding Procedure}. Note that assumption on the {\bf Relaxation Solver} is somewhat unusual: It is not that $(x_{i,S})$ is an optimal or near-optimal solution of (\ref{eq:config-LP}), but a different condition that the optimum social welfare with valuations by $w_i(S) = v_i(S) / V_i = v_i(S) / \sum_{S'} v_i(S') x_{i,S'}$ is upper-bounded by $c|\cA|$. (The social welfare of $x_{i,S}$ itself with valuations $w_i$ is exactly $|\cA|$, so as a consequence $(x_{i,S})$ is $c$-approximate optimum with respect to the valuations $w_i$.)
This condition is required primarily for the later ``rematching'' step (Lemma~\ref{lemma:extension}). 
Fortunately, this condition is satisfied by natural approaches to solve the ``Gale-Eisenberg'' relaxation, which replaces the continuous valuation extensions by their logarithms. We discuss this further in Section~\ref{sec:subadditive}.

\begin{theorem}
\label{thm:reduction}
Suppose that for a certain class of instances of Nash social welfare, with subadditive valuations, we have the following procedures available, with parameters $c,d \geq 1$:
\begin{itemize}
    \item {\bf Relaxation Solver:} Given valuations $(v_i: i \in \cA)$ on a set of items $\cG$, we can find a feasible solution $(x_{i,S})$ of (\ref{eq:config-LP}) such that the social welfare optimum with valuations
    $$ w_i(S) = \frac{1}{V_i} v_i(S), \ \ \ \ \    V_i = \sum_{S \subseteq \cG} v_i(S) x_{i,S} $$
    is at most $c |\cA|$.

    \item {\bf Rounding Procedure:} Given a feasible solution $(x_{i,S})$ of (\ref{eq:config-LP}), we can find an allocation $(S_1,\ldots,S_n)$ where each $S_i$ is a subset of some set $S'_i$ such that $x_{i,S'_i} > 0$ and
    $$ \sum_{i \in \cA} w_i(S_i) = \sum_{i \in \cA} \frac{1}{V_i} v_i(S_i) \geq \frac{1}{d} |\cA|.$$
    (As above, $V_i = \sum_{S \subseteq \cG} v_i(S) x_{i,S}$.)
    \end{itemize}

Then there is an algorithm which provides an $O(cd^2)$-approximation in Nash social welfare for the same class of instances, using $1$ call to the Relaxation Solver and a logarithmic number of calls to the Rounding Procedure. The running time is polynomial in $|\cA|, |\cG|$ and the support of the fractional solution $(x_{i,S})$. 
\end{theorem}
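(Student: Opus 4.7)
The plan is to convert the fractional configuration-LP solution $(x_{i,S})$ returned by the \textbf{Relaxation Solver} into an integral allocation while losing only an $O(cd^2)$ factor in Nash social welfare. Set $V_i = \sum_S v_i(S)\, x_{i,S}$ for the per-agent ``target value''; since the Relaxation Solver is built on an Eisenberg--Gale-style relaxation, the quantity $(\prod_i V_i)^{1/|\cA|}$ upper-bounds the integral NSW optimum up to a constant, so it suffices to output an allocation whose geometric-mean per-agent value is $\Omega(V_i/(cd^2))$. I first follow the matching-plus-relaxation template of \cite{GHV20,LiV21}: compute a matching $M$ that hands every agent a single best item (which already handles agents whose value is concentrated on one item) and then invoke the rematching machinery of \cref{app:rematching} to modify $(x_{i,S})$ so that items used by $M$ no longer appear in other agents' supports. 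The $c|\cA|$ social-welfare hypothesis on the Relaxation Solver is exactly what the rematching step needs in order to preserve fractional values up to a constant.

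I next apply the splitting procedure (\cref{lemma:splitting}) to replace $(x_{i,S})$ by a feasible solution whose support contains, for each agent $i$, only sets $S$ with $v_i(S) = \Omega(V_i)$, while keeping the per-item marginal probabilities $\sum_{S\ni j} x_{i,S}$ unchanged. This ``variance-reduction'' step guarantees that whenever the \textbf{Rounding Procedure} later returns $S_i \subseteq S'_i$ with $x_{i,S'_i}>0$, the parent set $S'_i$ has value $\Omega(V_i)$ before rounding. I then build the integral allocation iteratively: in round $t$, call the Rounding Procedure on the current fractional solution restricted to the surviving agents $\cA_t$ and items $\cG_t$; the hypothesis $\sum_{i \in \cA_t} w_i(S_i) \ge |\cA_t|/d$ combined with Markov's inequality yields a constant fraction of surviving agents whose received subset has value $\Omega(V_i/d)$. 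Declare these agents \emph{satisfied}, commit their rounded subsets together with their $M$-item, remove them and the corresponding items, and recurse on the residual instance. A subadditive concentration inequality based on Schechtman's theorem (\cref{app:concentration}) is invoked to show that after a random item filter (which reserves items for future rounds), $v_i(S_i)$ still concentrates around its expectation, so the Markov step survives the filtering; this is the step where subadditivity, rather than mere monotonicity, is essential.

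The bookkeeping then accumulates: a constant fraction of the still-surviving agents is satisfied in each round, so after $O(\log |\cA|)$ rounds essentially every agent has been satisfied with value $\Omega(V_i/(cd^2))$, and the $O(1)$ tail agents are covered by their matching contribution. The geometric-mean argument from the introduction (the infinite product $\prod_t (1/2^t)^{1/2^t}$ telescoping to a constant) then delivers
\[
\NSW \;\ge\; \Omega\!\left(\frac{1}{cd^2}\right) \cdot \left(\prod_{i} V_i\right)^{1/|\cA|} \;\ge\; \Omega\!\left(\frac{\OPT}{cd^2}\right).
\]
The main obstacle I anticipate is the residual-feasibility bookkeeping in the iterative phase: after removing the satisfied agents and their committed items, the truncated fractional solution must still satisfy the Relaxation Solver's normalized social-welfare bound on the residual instance so that the Markov step can be reapplied. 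Arranging this requires combining the splitting and rematching lemmas with a random reservation of items for future rounds, and is precisely what pushes the final approximation factor from the naive $O(d)$ up to $O(cd^2)$.
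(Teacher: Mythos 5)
Your proposal follows essentially the same template as the paper (initial matching, Relaxation Solver on the residual items, set splitting, iterated rounding with a random item filter and Schechtman-type concentration, final rematching), but two points in your write-up are off in ways that matter. First, your claim that after $O(\log|\cA|)$ rounds ``essentially every agent has been satisfied with value $\Omega(V_i/(cd^2))$'' is false and contradicts the telescoping-product argument you invoke in the next sentence. An agent satisfied in round $t$ receives $S_i\cap R_t$ where $R_t$ has density $\delta(1-\delta)^{t-1}$, so her value decays geometrically in $t$ and is polynomially small in $|\cA|$ for the last agents; there is no uniform $\Omega(V_i/(cd^2))$ guarantee. The actual analysis (\cref{lemma:ratio-bound,lemma:rounding-result}) bounds the \emph{expected logarithm} of the ratio $V_i/(v_i(T_i)+\nu_i)$ for the $i$-th satisfied agent by $\log\frac{60}{\delta^2(1-(i-1)/a)}$ and then averages these logarithms using the estimate $\log(a!/a^a)\ge -a$; the per-agent guarantee genuinely degrades with the agent's rank, and only the geometric mean is controlled. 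Relatedly, the reverse-Markov step that extracts a $\delta$-fraction of satisfied agents each round needs the \emph{upper} bound $v_i(S)\le V_i$ from \cref{lemma:splitting} (so no single agent can absorb the welfare guarantee), not just the lower bound $v_i(S)=\Omega(V_i)$ that you emphasize.

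Second, the ``main obstacle'' you anticipate --- that the residual fractional solution must re-satisfy the Relaxation Solver's $c|\cA|$ bound after removing satisfied agents and their items --- is a non-issue in the paper's design, and worrying about it suggests a misreading of which hypothesis is used where. The $c|\cA|$ bound is consumed exactly once, in the matching-extension step (\cref{lemma:extension}), to show that $\prod_i(V_i+v_i(\pi(i)))^{1/|\cA|}\ge \mathrm{OPT}/(c+1)$; it is never needed inside the iterated rounding. In \cref{alg:NSW-rounding} the items are \emph{pre-partitioned} into classes $R_1,R_2,\dots$ by independent geometric labels before any round runs, the same preprocessed solution $\bx'$ is fed to the Rounding Procedure in every round restricted to the surviving agents $\cA_t$ (which is trivially still feasible for the Configuration LP, since dropping agents only relaxes the item constraints), and the only thing that changes across rounds is which class $R_t$ filters the tentative sets. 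No items are ever deleted from the LP, so no residual-feasibility argument is needed. With these two corrections your outline matches the paper's proof.
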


  In the following, we prove this theorem by presenting an algorithm with several phases. These phases are similar to recent matching-based algorithms for Nash social welfare \cite{GHV20,GargHV21,LiV21,GHLVV23} with the exception of two phases which are new (phases 3,4 below). The high-level outline is as follows.

\paragraph{NSW Algorithm Template.}
\begin{enumerate}
    \item  We find an initial matching $\tau:\cA \to \cG$, maximizing $\prod_{i \in \cA} v_i(\{\tau(i)\})$. Let $\cH = \tau[\cA]$ denote the matching items and $\cG' = \cG \setminus \cH$ the remaining items. Let also $\cA' = \{ i \in \cA: v_i(\cG') > 0\}$. 

    \item We apply the {\bf Relaxation Solver} to obtain a fractional solution $(x_{i,S})_{i \in \cA', S \subseteq \cG'}$ and values $V_i = \sum_{S \subseteq \cG'} v_i(S) x_{i,S}$. We can view these values as ``targets'' for different agents to achieve. 

    \item Let $\nu_i = \max_{i \in \cG'} v_i(j)$ and $\cA'' = \{ i \in \cA': V_i \geq 6 \nu_i \}$. We preprocess the fractional solution $(x_{i,S})$ for $i \in \cA''$, removing sets of low value and partitioning sets of high value, so that for every set in the support of the new fractional solution $x'_{i,S}$ for agent $i$, we have $v_i(S) = \Theta(V_i)$.

    \item We apply the {\bf Rounding Procedure} to $x'_{i,S}$ to find an allocation $(S_i \in \cA'')$ satisfying
    $$\sum_{i \in \cA} \frac{1}{V_i} v_i(S_i) = \Omega\left(\frac{1}{d} |\cA|\right).$$
    Since each $S_i$ has value at most $V_i$ (due to our preprocessing), it must be the case that a $\Theta(\frac{1}{d}$)-fraction of agents receive value at least $\Theta(\frac{1}{d} V_i)$. We allocate a random $\Theta(\frac{1}{d})$-fraction of items to this $\Theta(\frac{1}{d})$-fraction of agents (each item from their respective sets independently with probability $\Theta(\frac{1}{d})$); call the resulting set $T_i$ for agent $i$. We repeat this phase for the remaining items and agents, until there are no agents left. For agents $i \in \cA \setminus \cA''$, we define $T_i = \emptyset$.

    \item We recompute the initial matching to obtain a new matching $\sigma:\cA \to \cH$, which maximizes $\prod_{i \in \cA} v_i(T_i + \sigma(i))$. We allocate $T_i + \sigma(i)$ to agent $i$.

\end{enumerate}

Now we proceed to analyze the phases of this algorithm more rigorously.

\subsection{Initial Matching}

There is nothing new in this phase. We can find a matching $\tau:\cA \to \cG$ maximizing $\prod_{i \in \cA} v_i(\tau(i))$ by solving a max-weight matching problem with edges $(i,j)$ where $v_i(j)>0$, and weights $w_{ij} = \log v_i(j)$. 

We denote by $\cH = \tau[\cA]$ the matched items, by $\cG' = \cG \setminus \cH$ the remaining items, and by $\cA' = \{ i \in \cA: v_i(\cG') > 0 \}$ the agents who get positive value from $\cG'$.

A property we need in the following is the following.

\begin{lemma}
If $\tau: \cA \to \cG$ is a matching maximizing $\prod_{i \in \cA} v_i(\tau(i))$ then for any $j \in \cG' = \cG \setminus \tau[\cA]$, $v_i(j) \leq v_i(\tau(i))$.
\end{lemma}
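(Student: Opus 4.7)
The plan is a straightforward exchange (swap) argument. Assume for contradiction that there exist an agent $i \in \cA$ and an item $j \in \cG' = \cG \setminus \tau[\cA]$ with $v_i(j) > v_i(\tau(i))$. I would then build a matching with a strictly larger product than $\tau$, contradicting its optimality.

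Concretely, define $\tau': \cA \to \cG$ by $\tau'(i) = j$ and $\tau'(k) = \tau(k)$ for all $k \neq i$. Injectivity of $\tau'$ is immediate: the only image that changes is that of $i$, and we are sending $i$ to $j \notin \tau[\cA]$, so no other agent is already assigned to $j$. When we compare $\prod_k v_k(\tau'(k))$ with $\prod_k v_k(\tau(k))$, every factor for $k \neq i$ is unchanged, and the $i$-th factor strictly increases from $v_i(\tau(i))$ to $v_i(j)$. Provided the remaining factors are positive, this gives $\prod_k v_k(\tau'(k)) > \prod_k v_k(\tau(k))$, the desired contradiction.

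The only place I expect any care is the edge case $v_i(\tau(i)) = 0$, where replacing a zero factor by a larger one does not automatically increase the product. To dispatch it, I would invoke the algorithmic formulation described just above the lemma: $\tau$ is obtained as a max-weight matching on the bipartite graph whose edges are the pairs $(i,j)$ with $v_i(j) > 0$, under weights $w_{ij} = \log v_i(j)$. Under that formulation any positive-weight edge is strictly preferred to leaving an agent unmatched, so an optimal $\tau$ uses a maximum-cardinality family of strictly positive-value edges. Hence if we were in a situation with $v_i(j) > v_i(\tau(i)) = 0$, then swapping the zero-value edge $(i,\tau(i))$ for the positive-value edge $(i,j)$ would strictly increase the number of positive-value edges used, again contradicting the optimality of $\tau$ in the log-weight formulation. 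Thus, whenever the hypothesis $v_i(j) > v_i(\tau(i))$ is non-trivial we already have $v_i(\tau(i)) > 0$, and the swap argument of the previous paragraph closes the proof. I do not anticipate any deeper obstacle.
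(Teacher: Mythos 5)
Your proof is correct and uses essentially the same one-line exchange argument as the paper: swap $\tau(i)$ for the unmatched item $j$ and observe the product strictly increases. Your extra care with the degenerate case $v_i(\tau(i))=0$ (via the max-weight-matching formulation over strictly positive-value edges) addresses a point the paper's terse proof silently glosses over, but it does not change the approach.
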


\begin{proof}
If there is $j \in \cG'$, $v_i(j) > v_i(\tau(i))$, then we can swap $\tau(i)$ for $j$ in the matching and increase its value. 
\end{proof}

For subadditive valuations, we also get $v_i(S+j) - v_i(S) \leq v_i(\tau(i))$ for any $S \subset \cG', j \in \cG' \setminus S$ (since $v_i(S+j) \leq v_i(S) + v_i(j)$).

\subsection{Relaxation Solver}
\label{sec:rematching}

Here we assume that the {\bf Relaxation Solver} is available as a black-box. We return to its implementations in specific settings in Section~\ref{sec:subadditive}.

We apply the {\bf Relaxation Solver} to the residual instance on items $\cG' = \cG \setminus \cH$ and agents $\cA'$ who have nonzero value for some items in $\cG'$. 
The important property of the obtained solution $(x_{i,S})$ is that after scaling the valuations as follows,
 $$ w_i(S) = \frac{1}{V_i} v_i(S), \ \ \ \ \    V_i = \sum_{S \subseteq \cG'} v_i(S) x_{i,S} $$
the social welfare optimum for $w_1,\ldots,w_n$ is at most $c|\cA'|$. In other words, for any feasible allocation $(T^*_1,\ldots,T^*_n)$ of $\cG'$, we have
$$ \sum_{i \in \cA'} \frac{v_i(T^*_i)}{V_i} \leq c |\cA'|.$$

\subsection{Set Splitting}

Here we describe Phase 3, preprocessing of the fractional solution. We will work only with agents who get significant value from the fractional solution:
Let $\nu_i = \max_{j \in \cG'} v_i(j)$ and
$$ \cA'' := \{ i \in \cA': V_i \geq 6 \nu_i \}. $$
We prove the following.

\begin{lemma}
\label{lemma:splitting}
Assume that the valuations $v_1,\ldots,v_n$ are subadditive. 
Given a feasible solution $(x_{i,S})$ of (\ref{eq:config-LP}) for an instance with agents $\cA''$ and items $\cG'$, where $V_i = \sum_{S \subseteq \cG'} v_i(S) x_{i,S}$ and $\nu_i = \max_{j \in \cG'} v_i(j)$, we can find (in running time and a number of value queries polynomial  in the number of nonzero coefficients $x_{i,S}$) a modified solution $(x'_{i,S})$ such that
\begin{itemize}
    \item For every $S$ such that $x'_{i,S} > 0$, $\frac13 V_i - \nu_i \leq v_i(S) \leq V_i$. \item For every $i \in \cA''$, $\sum_{S \subseteq \cG} x'_{i,S} = 1$.
    \item For every $j \in \cG'$, $\sum_{i,S \ni j} x'_{i,S} \leq 1$.
    
\end{itemize} 
\end{lemma}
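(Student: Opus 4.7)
The plan is to preprocess each agent's fractional distribution by splitting its support into three ranges based on $v_i(S)$: \emph{low} ($v_i(S) < \frac13 V_i - \nu_i$), \emph{medium} ($\frac13 V_i - \nu_i \le v_i(S) \le V_i$), and \emph{high} ($v_i(S) > V_i$). Medium sets I would keep as-is, low sets I would discard, and every high set $S$ I would break into several medium-value pieces $T_1, \ldots, T_{k_S} \subseteq S$, reusing the original $x_{i,S}$-mass across these pieces. Because each item still lands with agent $i$ at probability at most its original probability (medium sets are unchanged; for each high $S$, each item $j \in S$ lies in at most one piece and the piece is taken with probability at most $x_{i,S}$), the global item-usage constraint $\sum_{i, S \ni j} x'_{i,S} \le 1$ is inherited for free from the input solution.

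For the splitting, I would use a simple greedy peeling: given a high $S$, add items of $S$ one by one to a bundle $T$ until $v_i(T) \ge \frac13 V_i - \nu_i$. Subadditivity forces the marginal of a single item to be at most $\nu_i$, so at the moment we first cross the threshold we have $v_i(T) \in [\frac13 V_i - \nu_i, \frac13 V_i]$. I would then remove $T$ from $S$ and repeat on the residual, producing disjoint pieces of value in $[\frac13 V_i - \nu_i, \frac13 V_i]$, until the leftover has value below $\frac13 V_i - \nu_i$. This uses one value query per added item, so the whole procedure is polynomial in the support size. Applying subadditivity once more, $v_i(S) \le \sum_l v_i(T_l) + v_i(\mathrm{leftover})$, which gives $k_S \ge 3 v_i(S)/V_i - 1$.

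The main content of the lemma is a counting argument showing that these pieces carry enough mass to refill what was lost on the discarded low sets. Writing $A, B, C$ for the total $x_{i,S}$-mass of low, medium, and high sets (so $A + B + C = 1$), I would decompose $V_i = \sum_S v_i(S) x_{i,S}$, bound the low part by $(V_i/3) A$ and the medium part by $V_i B$, and conclude that the high part contributes at least $V_i(C + 2A/3)$. Combined with $k_S \ge 3 v_i(S)/V_i - 1$, the total piece-mass available for agent $i$ is $\sum_{S \text{ high}} k_S x_{i,S} \ge 2A + 2C$, which is at least twice the missing amount $A + C$ needed to bring the total mass back up to $1$.

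Given this, the construction is to set $x'_{i,T_l} := \alpha_i\, x_{i,S}$ for each piece $T_l$ peeled from a high set $S$, where $\alpha_i \in [0,1]$ is the unique scalar making $\sum_S x'_{i,S} = 1$; the counting forces $\alpha_i \le 1/2$, so $\alpha_i x_{i,S} \le x_{i,S}$ and the per-item accounting above preserves the Configuration-LP constraint. Every set in the support of $x'$ is then either medium (value in $[\frac13 V_i - \nu_i, V_i]$) or a peeled piece (value in $[\frac13 V_i - \nu_i, \frac13 V_i]$), so the value bound holds by construction. The main obstacle is pinning down the counting so that the threshold $\frac13 V_i - \nu_i$ leaves enough ``extra'' high mass ($2A + 2C$ versus the needed $A + C$); raising the threshold would push more sets into the ``low'' bucket, while lowering it would shrink the number of pieces extractable from each high set and thus the recoverable mass.
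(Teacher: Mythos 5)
Your proposal is correct and follows essentially the same approach as the paper's \textsc{SetSplitting} procedure: discard low-value sets, split high-value sets into pieces of value roughly $\frac13 V_i$ (losing at most $\nu_i$ per piece to indivisibility), and show by a mass-counting argument that the pieces suffice to renormalize each agent's distribution back to $1$ while only decreasing per-item coefficients. The differences (your three-bucket classification and greedy peeling with a leftover, versus the paper's single threshold at $\frac13 V_i$, exact partition into $\lfloor 3v_i(S)/V_i\rfloor$ parts, trimming to value at most $V_i$, and uniform normalization) are only cosmetic, and your counting $\sum k_S x_{i,S}\ge 2(A+C)$ plays the same role as the paper's bound $\sum k_{i,S}x_{i,S}\ge 1$.
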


\begin{proof}
We apply the following procedure to the fractional solution $\bx = (x_{i,S})$.

\paragraph{SetSplitting($\bx$).}
\begin{enumerate}
     \item Let  $V_i = \sum_{S \subseteq \cG'} v_i(S) x_{i,S}$, $\nu_i = \max_{j \in \cG'} v_i(j)$, and $\cF_i = \{S\subseteq\cG': v_i(S)\ge \frac13 V_i \}$. 

    \item Set $x'_{i, S} = 0$ and $k_{i,S} = 0$  for $S \notin \cF_i$; i.e., discard sets whose value is too low.
    
    \item For every $S \in \cF_i$, let $k_{i, S} = \lfloor \frac{3 v_{i}(S)}{V_i} \rfloor$. Split $S$ into sets $S_1, \cdots, S_{k_{i, S}}$ such that $\forall \ell=1,\ldots,k_{i,S}$,
    $$ v_i(S_\ell) \ge \frac13 V_i - \nu_i.$$
    Note that this is possible since by subadditivity, the average value of a subset in any partition of $S$ into $k_{i,S}$ subsets is at least $v_i(S) / k_{i,S} \geq \frac13 V_i$, and indivisibility of items can cause the value to drop by at most $\nu_i$. 

    \item For each set $S_\ell$ produced above, remove some items if necessary to ensure that its value is at most $V_i$. Call the resulting set $S'_\ell$. Note that since removing an item can decrease the value by at most $\nu_i$, we start from value $\geq \frac13 V_i - \nu_i$, and we only remove items as long as the value is more than $V_i$, we can conclude that
    $$V_i \geq v_i(S'_\ell) \ge \frac13 V_i - \nu_i.$$
    
    \item Set $\tilde{x}_{i,T} = \sum_{S \in \cF_i, \exists \ell: S'_\ell=T} x_{i,S}$, and 
         $x'_{i,T} = \frac{\tilde{x}_{i,T}}{\sum_S \tilde{x}_{i,S}}$.

    \item Return $\bx'$.
    
\end{enumerate}

    Let us now prove the desired properties of $\bx'$. By construction (step 5), the solution is normalized in the sense that $\sum_T x'_{i,T} = 1$ for every $i \in \cA''$. Also, as we argued above, $ V_i \geq v_i(T) \ge \frac13 V_i - \nu_i$ for every set $T$ participating in the support of $\bx'$. 
    It remains to prove that the coefficients $x'_{i,T}$ add up to at most $1$ on each item.
    
    Let us first consider $\tilde{x}_{i,T}$: Since each contribution to $\tilde{x}_{i,T}$ for $j \in T$ is inherited from some coefficient $x_{i,S}$ where $j \in S$, each coefficient $x_{i,S}$ contributes at most once in this way, and the coefficients $x_{i,S}$ for $S \ni j$ add up to at most $1$, it is clear that $\sum_{i,T \ni j} \tilde{x}_{i,T} \leq 1$. Finally, $x'_{i,T}$ is obtained by normalizing $\tilde{x}_{i,T}$; so we need to be concerned about the summation $\sum_{S} \tilde{x}_{i,S}$, which could be possibly less than $1$.
  
    
    We have:
    $$\sum_{S\in \cF_i} v_i(S)x_{i, S} = V_i - \sum_{S \notin \cF_i} v_i(S) x_{i, S}\ge V_i - \frac{1}{3} V_i = \frac23 V_i.$$ 
    Observe that each coefficient $x_{i,S}$ for $S \in \cF_i$ contributes $k_{i,S}$ coefficients of the same value to the summation $\sum_{S} \tilde{x}_{i,S}$, and the union of the respective sets is $S$. 
    So we have
    $$\sum_{S\subseteq \cG'} \tilde{x}_{i,S} = \sum_{S \in \cF_i} k_{i,S} x_{i,S} = \sum_{S \in \cF_i} \left\lfloor\frac{3 v_{i}(S)}{V_i} \right\rfloor \cdot x_{i,S} \ge \frac{3}{2} \sum_{S\in \cF_i}\frac{v_{i}(S)x_{i, S}}{V_i} \geq 1 $$ 
    considering that $3 v_i(S) / V_i \geq 1$ for $S \in \cF_i$, so the floor operation can decrease the ratio by at most a factor of $2$. Also, we have $\sum_{S \in \cF_i} v_i(S) x_{i,S} \geq \frac23 V_i$ from above.
    Hence $x'_{i,T} = \frac{\tilde{x}_{i,T}}{\sum_S \tilde{x}_{i,S}} \leq \tilde{x}_{i,T}$ and the coefficients $x'_{i,S}$ for $S \ni j$ add up to at most $1$.
\end{proof}

\subsection{Iterated Rounding}

Finally, we need to round the fractional solution $(x'_{i,S})$ obtained in the previous phase. 
As a subroutine, we use the assumed {\bf Rounding Procedure} for (additive) social welfare.

Given a fractional solution $\bx' = (x'_{i,S})$ obtained in the previous phase, we call the procedure {NSW-ROUND}$(\bx',\cA'',\cG',\delta)$ with a parameter $\delta = \frac{1}{7d}$, where $d$ is a approximation factor guaranteed by the {\bf Rounding Procedure}.

\begin{algorithm}[htbp]
\caption{\bf Iterated Rounding}
\label{alg:NSW-rounding}
\begin{algorithmic}[1]
  \Procedure{NSW-Round}{$\bx',\cA_0,\cG_0,\delta$}:
    \State Let $V'_i \leftarrow \sum_{S \subseteq \cG'} v_i(S) x'_{i,S}$
    \State For each item $j \in \cG_0$ independently, let $r_j \leftarrow t$ with probability $\delta (1-\delta)^{t-1}$ for $t \geq 1$.
    \State Let $R_t \leftarrow \{ j \in \cG_0: r_j=t \}$ for all $t \geq 1$
    \State Let $t \leftarrow 1$
    \While{$\cA_t \neq \emptyset$}
    \State $(S_i: i \in \cA_t) \leftarrow {\bf RoundingProcedure}(\bx',\cA_t)$
    \State $\cA_{t+1} \leftarrow \{ i \in \cA_t: v_i(S_i) < \delta V'_i\}$
    \State For each agent $i \in \cA_t \setminus \cA_{t+1}$, allocate $T_i \leftarrow S_i \cap R_t$.
    \EndWhile
    \State Return $(T_i: i \in \cA_0)$
  \EndProcedure
\end{algorithmic}
\end{algorithm}

As we mentioned above, the intuition behind this rounding procedure is that it gives good value to a large fraction of agents, and exponentially small values to an exponentially decaying number of agents, so overall its Nash social welfare is good. We prove this in a sequence of lemmas.

\begin{lemma}
Under our assumption on the {\bf Rounding Procedure}, and setting $\delta = \frac{1}{7d}$, in each round there is at least a $\delta$-fraction of agents (rounded up to the nearest integer) who receive value at least $\delta V'_i$.
\end{lemma}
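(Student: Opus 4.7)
The plan is to combine the sum guarantee of the Rounding Procedure with a uniform upper bound on each ratio $v_i(S_i)/V'_i$, and then extract the claimed fraction of ``good'' agents by a standard averaging argument.

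First, I would observe that the restriction of $\bx'$ to the currently active agent set $\cA_t \subseteq \cA''$ (keeping the same item set $\cG'$) remains a feasible solution of (\ref{eq:config-LP}): dropping rows cannot violate either the per-agent normalization $\sum_S x'_{i,S}=1$ or the per-item capacity $\sum_{i,S\ni j} x'_{i,S}\le 1$ established in \cref{lemma:splitting}. Applying the Rounding Procedure to this restricted instance therefore yields sets $(S_i)_{i\in\cA_t}$ with
$$\sum_{i\in\cA_t} \frac{v_i(S_i)}{V'_i} \;\ge\; \frac{|\cA_t|}{d}.$$

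Second, I would establish the uniform bound $v_i(S_i)/V'_i \le 6$ for every $i\in\cA_t$, which I expect to be the main subtlety. By monotonicity, $v_i(S_i)\le v_i(S'_i)$ for the superset $S'_i$ in the support of $\bx'_i$, and the upper-bound clause of \cref{lemma:splitting} gives $v_i(S'_i)\le V_i$. On the other hand, since every $S$ in the support of $\bx'_i$ satisfies $v_i(S)\ge \tfrac13 V_i - \nu_i$, averaging against $(x'_{i,S})_S$ yields $V'_i \ge \tfrac13 V_i-\nu_i$. Because $i\in\cA''$ forces $V_i\ge 6\nu_i$, this simplifies to $V'_i\ge \tfrac16 V_i$, hence $v_i(S_i)\le V_i \le 6V'_i$.

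Finally, an averaging argument closes the lemma. Let $k=|\cA_t\setminus\cA_{t+1}|$ denote the number of $i\in\cA_t$ with $v_i(S_i)\ge\delta V'_i$. If $k<\delta|\cA_t|$, bounding the $k$ ``good'' summands by $6$ and the remaining summands strictly below $\delta$ gives
$$\sum_{i\in\cA_t} \frac{v_i(S_i)}{V'_i} \;<\; 6k+\delta(|\cA_t|-k) \;<\; 6\delta|\cA_t|+\delta|\cA_t| \;=\; 7\delta|\cA_t| \;=\; \frac{|\cA_t|}{d},$$
contradicting the Rounding Procedure guarantee for $\delta=1/(7d)$. Hence $k\ge\delta|\cA_t|$, and since $k$ is an integer, $k\ge\lceil\delta|\cA_t|\rceil$, as claimed. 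The only nontrivial step is the bound $V_i\le 6V'_i$: it is precisely here that the definition of $\cA''$ (throwing away agents whose fractional value is dominated by a single item) is needed, in order to absorb the additive $\nu_i$-loss introduced by indivisibility during the set-splitting step.
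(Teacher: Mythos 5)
Your proof is correct and follows essentially the same route as the paper's: the bound $V'_i \ge \tfrac13 V_i - \nu_i \ge \tfrac16 V_i$ (using $V_i \ge 6\nu_i$ for $i \in \cA''$), the resulting uniform cap $v_i(S_i)/V'_i \le 6$, and the averaging contradiction $6\delta|\cA_t| + \delta|\cA_t| \le 7\delta|\cA_t| = |\cA_t|/d$. Your explicit check that restricting $\bx'$ to $\cA_t$ preserves feasibility of the Configuration LP is a small detail the paper leaves implicit, but it does not change the argument.
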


\begin{proof}
Note that $V'_i \geq \frac16 V_i$, since every set in the support of $x'_{i,S}$ has value at least $\frac13 V_i - \nu_i \geq \frac16 V_i$. We assume that under valuations $w_i(S) = \frac{1}{V'_i} v_i(S)$, the {\bf Rounding Procedure} returns an allocation $(S_i: i \in \cA_t)$ such that $\sum_{i \in \cA_t} w_i(S_i) \geq \frac{1}{d} |\cA_t|$. Also, the fractional solution $\bx'$ has been processed so that no set in its support for agent $i$ has value more than $V_i \leq 6 V'_i$, and the rounding only allocates subsets of sets in the support of $\bx'$. Hence, we have $w_i(S_i) = \frac{1}{V'_i} v_i(S_i) \leq 6$ for every agent $i$. Consider the agents who receive value $w_i(S_i) \geq \delta$; if the number of such agents is less than $\delta |\cA_t|$, then the total value collected by the agents is $\sum_{i \in \cA_t} w_i(S) < 6 \cdot \delta |\cA_t| + \delta \cdot (1 - \delta) |\cA_t| < 7\delta |\cA_t| = \frac{1}{d} |\cA_t|$, which is a contradiction.
\end{proof}

\begin{lemma}
\label{lemma:rnd-partition}
If $|\cA_0| = a$ and the agents are ordered by the round in which they received items (and arbitrarily within each round), then the $i$-th agent receives each element of her set $S_i$ independently with probability at least $\delta (1 - \frac{i-1}{a})$.
\end{lemma}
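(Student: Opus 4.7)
The plan is to observe that the round in which each agent is processed is deterministic (independent of the random labels $r_j$), then use the preceding lemma to translate the agent's rank $i$ into an upper bound on her round number $t(i)$, and finally use the independence of the $r_j$'s to derive both the probability bound and the independence claim simultaneously.

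The key observation is that the {\bf RoundingProcedure} is invoked on inputs $(\bx', \cA_t)$, neither of which depends on the random labels $(r_j)$. Hence, after conditioning on the internal randomness of the rounding procedure (or assuming it is deterministic), the entire sequence of agent sets $\cA_1 \supseteq \cA_2 \supseteq \cdots$ and the output sets $S_i$ are determined \emph{before} any $r_j$ is revealed; in particular, the round $t(i)$ in which agent $i$ receives items and her set $S_i$ are deterministic quantities.

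Next, I would apply the preceding lemma: in each round, at least $\lceil \delta |\cA_t|\rceil \geq \delta|\cA_t|$ agents are satisfied, so $|\cA_{t+1}| \leq (1-\delta)|\cA_t|$ and by induction $|\cA_t| \leq (1-\delta)^{t-1} a$. Since the agents are ordered by round, at most $i-1$ agents lie in rounds $1,\ldots,t(i)-1$, giving $|\cA_{t(i)}| \geq a-(i-1)$. Combining the two inequalities yields $(1-\delta)^{t(i)-1} \geq 1 - \frac{i-1}{a}$.

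Finally, for each $j \in S_i$, the definition $T_i = S_i \cap R_{t(i)}$ means that $j \in T_i$ if and only if $r_j = t(i)$. Because the $r_j$'s are mutually independent geometric variables and $t(i)$ is deterministic, the events $\{r_j = t(i)\}_{j \in S_i}$ are mutually independent, each occurring with probability $\delta(1-\delta)^{t(i)-1} \geq \delta\bigl(1 - \tfrac{i-1}{a}\bigr)$, which gives both conclusions at once. The only conceptual subtlety lies in the first step — carefully justifying that $t(i)$ and $S_i$ are measurable with respect to a source of randomness independent of $(r_j)$ — after which the remaining estimates are routine.
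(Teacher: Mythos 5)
Your proof is correct and follows essentially the same route as the paper's: bound $|\cA_{t}|\leq (1-\delta)^{t-1}a$ via the previous lemma, note that agent $i$ must still be present in her round so $a-i+1\leq(1-\delta)^{t-1}a$, and then use that $T_i=S_i\cap R_t$ where $R_t$ contains each item independently with probability $\delta(1-\delta)^{t-1}$. Your extra care in noting that the round $t(i)$ and the set $S_i$ are determined independently of the labels $(r_j)$ is a valid (and slightly more explicit) justification of the independence claim than the paper gives.
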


\begin{proof}
Consider the $i$-th agent, and suppose that $i \in \cA_t \setminus \cA_{t+1}$, i.e.~the agent gets items in round $t$. We claim that $a (1-\delta)^{t-1} \geq n-i+1$: In each round, we allocate items to at least a $\delta$-fraction of agents, so the set of agents $\cA_{t-1}$ remaining after $t-1$ rounds has size at most $a (1-\delta)^{t-1}$. This set must include agent $i$, otherwise she would have been satisfied earlier. Therefore, $a-i+1 \leq a (1-\delta)^{t-1}$.

The items allocated to agent $i$ in round $t$ are $S_i \cap R_t$, where $R_t$ contains each element independently with probability $\delta (1-\delta)^{t-1}$. By the argument above, 
$\delta (1-\delta)^{t-1} \geq \delta \cdot \frac{a-i+1}{a}$.
\end{proof}

\begin{lemma}
\label{lemma:ratio-bound}
If $T_i$ is the set allocated to the $i$-th agent in the ordering defined above (and we assume w.l.o.g. that the index of this agent is also $i$), and $\max_{j \in \cG'} v_i(j) \leq \nu_i$ then
$$ \E\left[ \log \frac{V_i}{v_i(T_i) + \nu_i} \right] \leq  \log \frac{60}{\delta^2 (1-\frac{i-1}{n})}.$$
\end{lemma}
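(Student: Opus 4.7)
I would first observe that by Lemma~\ref{lemma:rnd-partition}, each element of $S_i$ is independently in $T_i$ with probability at least $p := \delta(1 - (i-1)/n)$. By a coupling and monotonicity of $v_i$, I would then reduce to the case where this probability is exactly $p$, since further thinning only decreases $v_i(T_i)$ and hence only increases the quantity to be bounded. Next, I would combine the guarantee $v_i(S_i) \geq \delta V'_i$ of the \textbf{Rounding Procedure} (in the round where $i$ was satisfied) with Lemma~\ref{lemma:splitting} and $V_i \geq 6\nu_i$ (which holds since $i \in \cA''$) to conclude that every set in the support of $x'_{i,\cdot}$ has value in $[V_i/6,\, V_i]$, so $V'_i \geq V_i/6$ and hence $v_i(S_i) \in [\delta V_i/6,\, V_i]$.

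\paragraph{Subsampling and concentration.} The heart of the argument is to lower bound $v_i(T_i)$ probabilistically. I would invoke two tools. The first is a standard subadditive subsampling inequality: partitioning $S_i$ into $\lceil 1/p \rceil$ uniformly random parts gives some part with $\E \geq v_i(S_i)/\lceil 1/p \rceil$ by subadditivity, and coupling with i.i.d.~copies of $T_i$ whose union covers each element with probability at least $1 - 1/e$ then yields $\E[v_i(T_i)] \geq c_1 \cdot p \cdot v_i(S_i)$ for an absolute constant $c_1$. The second is the subadditive Schechtman--Talagrand concentration from Appendix~\ref{app:concentration}: around the median $M := \med(v_i(T_i))$, the two-sided tail $\Pr[|v_i(T_i) - M| > t] \leq 2\exp(-t^2/(16 M \nu_i))$ gives both a median lower bound $M \geq c_2 p v_i(S_i)$ (by translating the expectation bound via $\E[v_i(T_i)] - M = O(\sqrt{M\nu_i})$, valid once $M/\nu_i$ is above a sufficiently large constant) and the lower tail $\Pr[v_i(T_i) \leq M/2] \leq 2\exp(-M/(64\nu_i))$.

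\paragraph{Two-regime bound.} Finally, I would split the analysis according to whether $p v_i(S_i)/\nu_i$ is below or above a suitably large constant $C$. In the \emph{low regime} $p v_i(S_i) \leq C\nu_i$, I'd derive $V_i \leq 6v_i(S_i)/\delta \leq 6C\nu_i/(p\delta)$, so the deterministic bound $v_i(T_i) + \nu_i \geq \nu_i$ yields $\E[\log(V_i/(v_i(T_i)+\nu_i))] \leq \log(V_i/\nu_i) \leq \log(6C/(p\delta))$. In the \emph{high regime} $p v_i(S_i) > C\nu_i$, I'd exploit $M \geq c_2 p v_i(S_i) \gg \nu_i$ and split the expectation at the event $\{v_i(T_i) \geq M/2\}$ to get
\[
\E\!\left[\log \frac{V_i}{v_i(T_i)+\nu_i}\right] \leq \log\!\frac{2V_i}{M} + \Pr[v_i(T_i) \leq M/2] \cdot \log\!\frac{V_i}{\nu_i}.
\]
The first term is at most $\log(12/(c_2 p \delta))$. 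For the second, letting $y := M/\nu_i$ and using $V_i/\nu_i \leq 12y/(c_2 p \delta)$, the product $2 e^{-y/64}\bigl(\log y + \log(12/(c_2 p \delta))\bigr)$ is maximized over $y \geq 1$ at a value of order $\log(1/(p\delta))$, matching the target up to constants. The main subtlety is that the subsampling gives an expectation bound while the concentration is centered on the median, which is exactly what forces the two-regime split; tracking all constants carefully yields the stated numeric bound with $60$ in the numerator.
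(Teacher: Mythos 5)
Your overall architecture is the same as the paper's: lower-bound $v_i(S_i)$ via the Rounding Procedure guarantee and Lemma~\ref{lemma:splitting}, lower-bound the expectation of the subsampled value by the sampling probability times $v_i(S_i)$ (the paper's Lemma~\ref{lemma:subadditive-exp}, applied to $f(T)=v_i(T)+\nu_i$ rather than to $v_i$ itself), handle the regime where $\nu_i$ is a non-negligible fraction of the target deterministically, and in the remaining regime combine a lower-tail bound with a split of the expectation of the logarithm at a constant fraction of the mean. So the plan is right.

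The genuine gap is in the concentration step. You invoke a two-sided sub-Gaussian tail around the median, $\Pr[|v_i(T_i)-M|>t]\leq 2\exp(-t^2/(16M\nu_i))$, and then derive from it both $\E[v_i(T_i)]-M=O(\sqrt{M\nu_i})$ and $\Pr[v_i(T_i)\leq M/2]\leq 2\exp(-M/(64\nu_i))$. No such inequality is available for general monotone subadditive functions, and it is not what Schechtman's $q$-point control gives: Theorem~\ref{thm:Schechtman} has the asymmetric form $\Pr[f(R)\geq (q+1)a+k]\,(\Pr[f(R)\leq a])^q\leq q^{-k}$, and the usable consequence (Theorem~\ref{thm:lower-tail}) is a lower tail at roughly $\E[f(R)]/(5(q+1))$ minus an additive $\nu$-term, with probability $(2/q^k)^{1/q}$ --- note the built-in constant-factor loss between mean and median (Lemma~\ref{lemma:median}), which is exactly why the paper's threshold is $\mu_i/30$ rather than $M/2$. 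Sub-Gaussian concentration of Lipschitz functions via Talagrand requires a small certificate (self-bounding/XOS-type structure), which subadditive functions lack; that is the whole reason the $q$-point control machinery is needed. Your downstream computation would survive if you replaced your claimed inequality by Theorem~\ref{thm:lower-tail} with $q=2$ and $k+1=\lfloor \mu_i/(10\nu_i)\rfloor$, as the paper does, but as written the key probabilistic input is unjustified. A secondary issue: your subsampling step loses an unspecified constant $c_1$ (the $1-1/e$ coupling is unnecessary --- the direct random-coloring argument of Lemma~\ref{lemma:subadditive-exp} gives $\E[f(T_i)]\geq p\,f(S_i)$ with no loss), and with $c_1,c_2,C$ untracked you have not actually established the stated constant $60$.
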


\begin{proof}
By definition, the set $S_i$ tentatively chosen for the $i$-th agent in the round where $i \in \cA_t \setminus \cA_{t+1}$ satisfies $$v_i(S_i) \geq \delta V'_i \geq \delta \left(\frac{1}{3} V_i - \nu_i \right) 
 $$ (see Lemma~\ref{lemma:splitting}). By Lemma~\ref{lemma:rnd-partition}, the $i$-th agent receives a set $T_i = S_i \cap R_t$ which contains each element of $S_i$ independently with probability at least $\delta (1-\frac{i-1}{n})$.

Consider now the expression $\log \frac{V_i}{f(T_i)}$, where $f(T_i) = v_i(T_i) + \nu_i$. This is a random quantity due to the randomness in $R_t$ (the set $S_i$ is fixed here). We use concentration of subadditive functions (Theorem~\ref{thm:lower-tail}) to argue that this expression is not too large in expectation. 
We have $f(S_i) = v_i(S_i) + \nu_i \geq \frac13 \delta V_i$. By the expectation property of subadditive functions (Lemma~\ref{lemma:subadditive-exp}), we have 
\begin{gather*}\E[f(T_i)] = \E[f(S_i \cap R_t)] \geq \delta (1 - \frac{i-1}{n}) \cdot \frac13 \delta V_i = \frac13 \delta^2 (1 - \frac{i-1}{n}) V_i.
\end{gather*}
Let us denote the last expression $\mu_i := \frac13 \delta^2 (1 - \frac{i-1}{n}) V_i \leq \E[f(T_i)]$. 

Now, we apply the lower-tail inequality, Theorem~\ref{thm:lower-tail}, with $q=2$.
Observe that we can assume $\nu_i < \frac{1}{20} \mu_i$. Otherwise,
$$ \frac{V_i}{\nu_i} \leq \frac{20 V_i}{\mu_i} = \frac{60}{\delta^2 (1 - \frac{i-1}{n})}$$
and so the desired bound holds.

Let us set $q=2$ and $k+1 = \lfloor \frac{\mu_i}{10 \nu_i} \rfloor \geq \frac{\mu_i}{15 \nu_i}$ (considering that $\frac{\mu_i}{10 \nu_i} \geq 2$) in Theorem~\ref{thm:lower-tail}. We get
\begin{align*} \Pr\left[ f(T_i) < \frac{\mu_i}{30} \right]  & \leq
\Pr\left[f(T_i) \leq \frac{\E[f(T_i)]}{15} - \frac{\mu_i}{30} \right] \\
 & \leq \Pr\left[f(T_i) \leq \frac{\E[f(T_i)]}{5(q+1)} - \frac{(k+1) \nu_i}{q+1} \right] \\
& \leq \left( \frac{2}{2^k} \right)^{1/2} \leq \frac{2}{2^{\mu_i/(30 \nu_i)}}.
\end{align*}
Our goal is to bound the expectation $\E[\log \frac{V_i}{f(T_i)}]$.
We distinguish two cases: When $f(T_i) < \frac{1}{30} \mu_i$, we use the bound $f(T_i) \geq \nu_i$, which always holds. Otherwise, we use the bound $f(T_i) \geq \frac{1}{30} \mu_i$. From here,
$$ \E\left[\log \frac{V_i}{f(T_i)}\right] \leq 
\Pr\left[f(T_i) < \frac{\mu_i}{30}\right] \cdot \log \frac{V_i}{\nu_i} + \left( 1 - \Pr\left[f(T_i) < \frac{\mu_i}{30}\right] \right) \cdot \log \frac{30 V_i}{\mu_i} $$
$$ = \Pr\left[f(T_i) < \frac{\mu_i}{30}\right] \cdot \log \frac{\mu_i}{30 \nu_i} + \log \frac{30 V_i}{\mu_i}  $$
$$\leq \frac{2}{2^{\mu_i/(30 \nu_i)}} \log \frac{\mu_i}{30 \nu_i} + \log \frac{30 V_i}{\mu_i}.$$
One can verify that the function $\frac{2}{2^{x}} \log x$ is upper-bounded by $\log 2$ for all $x>0$.\footnote{We are using natural logarithms everywhere. \Aviad{Should we switch to $\ln$ everywhere?} \Jan{not sure right now... $\log$ looks generally nicer}}
Hence, 
$$ \E\left[\log \frac{V_i}{f(T_i)}\right] \leq \log 2 + \log \frac{30 V_i}{\mu_i}
< \log \frac{60 V_i}{\mu_i} = \log \frac{60}{\delta^2 (1 - \frac{i-1}{n})}. $$
\end{proof}

\begin{lemma}
\label{lemma:rounding-result}
If $T_i$ is the set allocated to the $i$-th agent in the ordering defined above (and we assume w.l.o.g. that the index of this agent is also $i$), and $\max_{j \in \cG'} v_i(j) \leq \nu_i$ then
$$ \frac{1}{|\cA''|} \sum_{i \in \cA''} \E\left[ \log \frac{V_i}{v_i(T_i) + \nu_i} \right] 
 \leq \log \frac{165}{\delta^2}.
$$
\end{lemma}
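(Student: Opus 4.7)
The plan is to simply sum the per-agent bound from Lemma~\ref{lemma:ratio-bound} over all $i \in \cA''$ and show that the average is absorbed into a constant. Let $m = |\cA''|$, and order the agents $1,\ldots,m$ as in the statement, so that the rounding procedure runs on $\cA_0 = \cA''$ with $a = m$. Plugging $n = m$ into Lemma~\ref{lemma:ratio-bound} (or equivalently, noting that the proof there uses only the quantity $1 - (i-1)/|\cA_0|$ coming from Lemma~\ref{lemma:rnd-partition}), we get for every $i$
\[
\E\!\left[\log \frac{V_i}{v_i(T_i)+\nu_i}\right] \;\le\; \log \frac{60}{\delta^2\,(1 - (i-1)/m)}.
\]

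Averaging over $i = 1,\ldots,m$ and splitting the logarithm gives
\[
\frac{1}{m}\sum_{i=1}^m \E\!\left[\log \frac{V_i}{v_i(T_i)+\nu_i}\right] \;\le\; \log \frac{60}{\delta^2} \;+\; \frac{1}{m}\sum_{i=1}^m \log \frac{m}{m-i+1}.
\]
Reindexing by $k = m-i+1$, the second sum equals $\log m - \frac{1}{m}\log(m!)$. The only real work is to bound this quantity by a constant independent of $m$; this follows from the elementary Stirling lower bound $m! \ge (m/e)^m$, which gives $\log(m!) \ge m\log m - m$ and hence $\log m - \frac{1}{m}\log(m!) \le 1$ for all $m \ge 1$.

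Combining the two pieces, the average is at most $\log(60/\delta^2) + 1 = \log(60e/\delta^2)$, and since $60e < 164 < 165$ this is bounded by $\log(165/\delta^2)$, as required. There is no real obstacle beyond keeping the indexing consistent (the ``$n$'' in Lemma~\ref{lemma:ratio-bound} refers to the number of agents handed to the rounding routine, namely $|\cA''|$, not $|\cA|$) and carrying out the Stirling step carefully; the degenerate endpoint $i = m$ is harmless because $1 - (i-1)/m = 1/m > 0$.
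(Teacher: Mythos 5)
Your proposal is correct and follows essentially the same route as the paper: average the per-agent bound from Lemma~\ref{lemma:ratio-bound}, reduce the residual sum to $\log(a!/a^a)$, and bound it by $1$ via the standard factorial estimate $a! \ge (a/e)^a$, yielding $\log(60e/\delta^2) < \log(165/\delta^2)$. Your explicit handling of the $n$ versus $|\cA''|$ indexing (which the paper's statement of Lemma~\ref{lemma:ratio-bound} leaves slightly ambiguous) is a welcome clarification, not a deviation.
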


\begin{proof}
Let us denote $a = |\cA''|$.
From Lemma~\ref{lemma:ratio-bound}, we have
\begin{eqnarray*}
 \frac{1}{a} \sum_{i=1}^{a} \E\left[ \log  \frac{V_i}{v_i(T_i) + \nu_i} \right] 
& \leq & \frac{1}{a} \sum_{i=1}^{a} \log \frac{60}{\delta^2 (1 - \frac{i-1}{n})} 
 = \left( \log \frac{60}{\delta^2} - \frac{1}{a} \sum_{i=1}^{a}  \log \left(1 - \frac{i-1}{a} \right) \right) 
\end{eqnarray*}
Here, we have $\sum_{i=1}^{a} \log (1 - \frac{i-1}{a}) = \log \prod_{i=1}^{a} \frac{a-i+1}{a} = \log \frac{a!}{a^{a}} \geq -a$ by a standard estimate for the factorial.  So we can conclude
$$  \frac{1}{a} \sum_{i=1}^{a}  \E\left[ \log  \frac{V_i}{v_i(T_i) + \nu_i} \right] \leq  \log \frac{60}{\delta^2} + 1 < \log \frac{165}{\delta^2}. $$
\end{proof}

This concludes the analysis of the iterated rounding phase, which allocates the set $T_i$ to each agent $i \in \cA''$. For agents $i \in \cA \setminus \cA''$, we set $T_i=\emptyset$.

\subsection{Rematching and finishing the analysis}

The last step in the algorithm is to replace the initial matching $\tau: \cA \to \cH$ with a new matching $\sigma: \cA \to \cH$ which is optimal on top of the allocation $(T_i: i \in \cA)$. To analyze this step, we need two lemmas from previous work, whose proofs can be modified easily to yield the following. (We provide full proofs in the appendix.)

\begin{lemma}[matching extension]
\label{lemma:extension}
Let $\tau:\cA \rightarrow \cG$ be the matching maximizing $\prod_{a \in \cA} v_i(\tau(a))$, $\cH = \tau(\cA)$, and $\cG' = \cG \setminus \cH$. Let $(V_i: i \in \cA)$ be values such that $V_i>0$ for $i \in \cA'$, $V_i=0$ for $i \in \cA \setminus \cA'$ and
$$ \sum_{i \in \cA'} \frac{v_i(T^*_i)}{V_i} \leq c |\cA'| $$
for every allocation $(T^*_1,\ldots,T^*_n)$ of the items in $\cG'$.
Then there is a matching $\pi:\cA \rightarrow \cH$ such that 
$$ \prod_{i \in \cA} (V_i + v_i(\pi(i)))^{1/|\cA|}
 \geq \frac{1}{c+1} \prod_{i \in \cA} \left( v_i(S^*_i) \right)^{1/|\cA|} = \frac{OPT}{c+1}$$
 where $(S^*_1,\ldots,S^*_n)$ is an allocation of $\cG$ optimizing Nash social welfare.
\end{lemma}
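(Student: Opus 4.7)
The plan is to compare $\mathrm{OPT} = \prod_i v_i(S_i^*)^{1/|\cA|}$ against $\prod_i (V_i + v_i(\pi(i)))^{1/|\cA|}$ for a carefully chosen matching $\pi$, by splitting each optimal bundle $S_i^*$ into its parts inside $\cH$ and $\cG'$. Write $H_i = S_i^* \cap \cH$ and $G_i = S_i^* \cap \cG'$; subadditivity gives $v_i(S_i^*) \leq v_i(H_i) + v_i(G_i)$. The contribution of the $G_i$'s is controlled directly by the hypothesis: since $v_i(\cG') = 0$ for $i \notin \cA'$, the convention $0/0 = 0$ extends the bound to $\sum_{i \in \cA} v_i(G_i)/V_i \leq c|\cA|$.

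To handle the $H_i$'s, I would construct $\pi$ so that $v_i(H_i)$ is controlled by $v_i(\pi(i))$ on average. The $H_i$'s partition $\cH$ because $(S_i^*)$ is an allocation, so $\sum_i |H_i| = |\cH| = |\cA|$. Take $\pi(i) = \argmax_{j \in H_i} v_i(\{j\})$ when $H_i \neq \emptyset$, and any unused item of $\cH$ otherwise; disjointness of the $H_i$'s and the matching counts of leftover items versus leftover agents make this a perfect matching $\cA \to \cH$. Subadditivity together with the $\argmax$ rule yields $v_i(H_i) \leq \sum_{j \in H_i} v_i(\{j\}) \leq |H_i|\, v_i(\pi(i))$, hence $\sum_i v_i(H_i)/v_i(\pi(i)) \leq \sum_i |H_i| = |\cA|$.

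To fuse the two bounds into a single product inequality, I would invoke the elementary fact $\tfrac{a+b}{c+d} \leq \tfrac{a}{d} + \tfrac{b}{c}$ for $a,b \geq 0$ and $c,d > 0$ (it reduces to $0 \leq ac^2 + bd^2$), applied termwise with $(a,b,c,d) = (v_i(H_i), v_i(G_i), V_i, v_i(\pi(i)))$. Summing gives $\sum_i \frac{v_i(H_i)+v_i(G_i)}{V_i + v_i(\pi(i))} \leq (c+1)|\cA|$, and AM--GM then produces
\[ \prod_i \frac{v_i(S_i^*)}{V_i + v_i(\pi(i))} \leq \prod_i \frac{v_i(H_i)+v_i(G_i)}{V_i + v_i(\pi(i))} \leq (c+1)^{|\cA|}. \]
Taking $|\cA|$-th roots delivers the claimed bound $\prod_i (V_i + v_i(\pi(i)))^{1/|\cA|} \geq \mathrm{OPT}/(c+1)$.

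The main obstacle I anticipate is not the algebra but verifying that $V_i + v_i(\pi(i)) > 0$ for every agent so all ratios are well-defined. For $i \in \cA'$ this is immediate because $V_i > 0$. For $i \notin \cA'$ we have $V_i = 0$ and $v_i(\cG') = 0$, so $v_i(G_i) = 0$; then $\mathrm{OPT} > 0$ forces $v_i(S_i^*) > 0$ and hence $v_i(H_i) > 0$ and $H_i \neq \emptyset$, at which point subadditivity guarantees some $j \in H_i$ with $v_i(\{j\}) > 0$, so the $\argmax$ choice of $\pi(i)$ satisfies $v_i(\pi(i)) > 0$. Notably the initial matching $\tau$'s max-weight property is not needed for this lemma --- its role is only to define $\cH$; the full exchange optimality of $\tau$ is used in later steps.
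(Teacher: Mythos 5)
Your proposal is correct and follows essentially the same route as the paper's proof: the same decomposition of each $S^*_i$ into its $\cH$-part and $\cG'$-part, the same choice of $\pi(i)$ as the best singleton in $H_i$, the same subadditivity bound $v_i(H_i)\le |H_i|\,v_i(\pi(i))$ with $\sum_i |H_i|\le|\cH|=|\cA|$, and the same AM--GM step. The only cosmetic difference is how the per-agent ratio is split --- the paper lower-bounds the denominator by $\max\{V_i, v_i(\pi(i))\}$ and treats $\cA'$ and $\cA\setminus\cA'$ separately, whereas you use the cross inequality $\frac{a+b}{c+d}\le\frac{a}{d}+\frac{b}{c}$ --- and your closing observation that the optimality of $\tau$ is not used here is accurate.
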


\begin{lemma}[rematching]
\label{lemma:rematching}
Let $\tau:\cA \rightarrow \cG$ be the matching maximizing $\prod_{a \in \cA} v_i(\tau(a))$, $\cH = \tau(\cA)$, $\cG' = \cG \setminus \cH$, $\pi:\cA \to \cH$ another arbitrary matching, and $\nu_i = \max_{j \in \cG'} v_i(j)$. Let $(W_i: i \in \cA)$ be nonnegative values.  Then there is a matching $\rho: \cA \rightarrow \cH$ such that
$$ \prod_{i \in \cA} (\max\{W_i,v_i(\rho(i))\})^{1/|\cA|} \geq \prod_{i \in \cA} \left(\max \{W_i, v_i(\pi(i)), \nu_i\} \right)^{1/|\cA|}. $$
\end{lemma}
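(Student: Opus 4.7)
The plan is to exploit the cycle structure of the permutation $\pi \circ \tau^{-1}$ on $\cH$, which is well-defined since both $\tau$ and $\pi$ are bijections $\cA \to \cH$ (using $|\cH|=|\cA|$). Its disjoint-cycle decomposition partitions $\cA$ into cycle groups: in each cycle, the agents $a_1, \dots, a_k$ and items $h_1, \dots, h_k \in \cH$ satisfy $\tau(a_j) = h_j$ and $\pi(a_j) = h_{j+1 \bmod k}$. The only two perfect matchings between a cycle's agents and its items are the restrictions of $\tau$ and $\pi$, so any candidate matching $\rho:\cA \to \cH$ is determined by a cycle-by-cycle choice between $\tau$ and $\pi$.

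The plan is to build $\rho$ cycle by cycle, picking at each cycle whichever of the two options dominates the cycle's contribution to the RHS product. Concretely, on each cycle it suffices to show
\[
\max\!\Big(\prod_{j} \max\{W_{a_j}, v_{a_j}(\tau(a_j))\},\ \prod_{j} \max\{W_{a_j}, v_{a_j}(\pi(a_j))\}\Big) \ \geq\ \prod_{j} \max\{W_{a_j}, v_{a_j}(\pi(a_j)), \nu_{a_j}\};
\]
concatenating cycle-local choices then gives the global $\rho$ required by the lemma. The tools available on each cycle are: (i) the pointwise bound $v_{a_j}(\tau(a_j)) \geq \nu_{a_j}$ from the preceding lemma; (ii) the cyclic-exchange inequality $\prod_j v_{a_j}(\tau(a_j)) \geq \prod_j v_{a_j}(\pi(a_j))$ from the optimality of $\tau$; and (iii) more general ``mixed'' exchange inequalities obtained by replacing $\tau(a_j)$ by a $\nu_{a_j}$-achieving item of $\cG'$ for $j$ in some subset $S$ of the cycle and reassigning the remaining cycle agents via $\pi$-edges, all yielding products upper-bounded by $\prod_j v_{a_j}(\tau(a_j))$ because $\tau$ is globally optimal over matchings into $\cG$.

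I expect the main obstacle to be the cycle-wise inequality in the regime where $\nu_{a_j}$ strictly dominates $\max\{W_{a_j}, v_{a_j}(\pi(a_j))\}$ on a proper subset of the cycle's agents: the all-$\pi$ option then falls short on those agents, so we must take the all-$\tau$ option and recover the lost $\prod_j v_{a_j}(\pi(a_j))$ mass using (i)--(iii). Ingredient (ii) alone is not sharp enough in this regime, and ingredient (iii) is where we really need that $\tau$ is globally, not merely cycle-wise, optimal. The plan is to proceed by a case analysis on which of the three terms $W_{a_j}, v_{a_j}(\pi(a_j)), \nu_{a_j}$ attains the maximum at each $j$, and then multiplicatively combine inequalities of type (iii) over appropriately chosen subsets $S$ so that the product of $\nu_{a_j}$ factors on the RHS is absorbed into the product of $v_{a_j}(\tau(a_j))$ factors on the LHS. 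Once this cycle-wise step is in place, the global bound follows by taking the product over all cycles.
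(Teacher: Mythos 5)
There is a genuine gap: the cycle-wise reduction is not valid, and the cycle-local inequality you set out to prove is false. The problem is your claim that ``the only two perfect matchings between a cycle's agents and its items are the restrictions of $\tau$ and $\pi$'': this is only true if you insist that $\rho$ use edges of $\tau\cup\pi$, but the matching $\rho$ required by the lemma may have to assign an agent $i$ with dominant $W_i$ an item that is neither $\tau(i)$ nor $\pi(i)$, which lets the remaining agents of the cycle mix $\tau$-items and $\pi$-items. Concretely, take a $3$-cycle with $\tau(j)=h_j$, $\pi(j)=h_{j+1 \bmod 3}$, one extra item $g\in\cG'$, and singleton values $v_1(h_1)=v_1(g)=1$, $v_1(h_2)=10$; $v_2(h_2)=100$, $v_2(h_1)=v_2(h_3)=v_2(g)=1$; $v_3(h_3)=v_3(g)=5$, $v_3(h_1)=1$; with $W_1=W_3=0$, $W_2=100$. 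Then $\tau$ is an optimal matching into $\cG$ (value $1\cdot 100\cdot 5=500$), the right-hand side of your cycle inequality is $\max\{10,1\}\cdot\max\{100,1,1\}\cdot\max\{1,5\}=5000$, but the all-$\tau$ option gives $1\cdot 100\cdot 5=500$ and the all-$\pi$ option gives $10\cdot 100\cdot 1=1000$. The lemma still holds here via $\rho(1)=h_2$, $\rho(3)=h_3$, $\rho(2)=h_1$, i.e., a matching that mixes a $\pi$-edge and a $\tau$-edge inside one cycle and parks agent $2$ (whose $W_2$ dominates) on a leftover item. A secondary issue is that your ingredient (iii) exchanges are not always feasible matchings: if a $\pi$-reassigned agent $a_j$ and a $\tau$-retaining agent $a_{j+1}$ are adjacent on the cycle they compete for $h_{j+1}$, and two agents sent to $\cG'$ may have the same $\nu$-achieving item.

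The paper's proof avoids both problems by first discarding all agents with $W_i\ge\max\{v_i(\pi(i)),\nu_i\}$ (they will receive arbitrary leftover items of $\cH$), which breaks the $\tau/\pi$ cycles into directed \emph{paths}; each maximal path is terminated at an agent $s$ for which $\nu_s$ dominates, and a single exchange per path (give all non-terminal agents their $\pi$-items, give $s$ one item of $\cG'$, leave the first $\tau$-item unused) is a genuine feasible matching, so optimality of $\tau$ yields $\prod_{i\in P}v_i(\tau(i))\ge \nu_s\prod_{i\in P\setminus\{s\}}v_i(\pi(i))$. If you want to keep your framework, the fix is to do your case analysis after deleting the $W$-dominated agents and to organize the exchanges along these paths rather than whole cycles; as written, the cycle-by-cycle choice between $\tau$ and $\pi$ cannot be made to work.
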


We apply Lemma~\ref{lemma:extension} with the values $V_i = \sum_{S \subseteq \cG'} v_i(S) x_{i,S}$, where $(x_{i,S})$ is the fractional solution returned by {\bf Relaxation Solver}. Due to our assumptions, the condition of Lemma~\ref{lemma:extension} is satisfied and hence there is a matching $\pi:\cA \to \cH$ as described in Lemma~\ref{lemma:extension}: 
$$ \prod_{i \in \cA} (V_i + v_i(\pi(i) \})^{1/|\cA|}  \geq \frac{OPT}{c+1}. $$
From Lemma~\ref{lemma:rounding-result}, we can find with constant probability an assignment $(T_i: i \in \cA'')$ such that
$$ \left( \prod_{i \in \cA''} \frac{V_i}{v(T_i) + \nu_i} \right)^{1/|\cA''|} < \frac{200}{\delta^2} < 10000 \, d^2. $$
(Recall that $\delta = \frac{1}{7d}$, where $d>1$ is the parameter guaranteed by the {\bf Rounding Procedure}.) 

Moreover, we know that $v(T_i) \leq V_i$ and $V_i \geq 6\nu_i$, hence $v(T_i) + \nu_i \leq 2V_i$ for $i \in \cA''$. For agents in $\cA \setminus \cA''$, we have $T_i = \emptyset$ and $V_i \leq 6 \nu_i$. From here, we have
\begin{eqnarray*}
 \prod_{i \in \cA} \frac{V_i + v_i(\pi(i))}{v(T_i) + \nu_i + v_i(\pi(i))} 
 & \leq & \prod_{i \in \cA''} \frac{2V_i + v_i(\pi(i))}{v(T_i) + \nu_i + v_i(\pi(i))} \prod_{i \in \cA \setminus \cA''}  \frac{6 \nu_i + v_i(\pi(i))}{\nu_i + v_i(\pi(i))} \\
& \leq & \prod_{i \in \cA''} \frac{2V_i}{v(T_i) + \nu_i} \prod_{i \in \cA \setminus \cA''} 6
\leq (20000 \, d^2)^{|\cA|}.
\end{eqnarray*}
Finally, we use the rematching Lemma~\ref{lemma:rematching}, with values $v_i(T_i)$:
there exists a matching $\rho: \cA \to \cH$ such that
$$ \prod_{i \in \cA} (\max \{ v_i(T_i), v_i(\rho(i))\})^{1/|\cA|} \geq 
\prod_{i \in \cA} (\max \{ v_i(T_i), v_i(\pi(i)), \nu_i \})^{1/|\cA|} \geq 
\frac13 \prod_{i \in \cA} (V_i + \nu_i + v_i(\pi(i)))^{1/|\cA|} $$
$$ \geq \frac{1}{20000 \, d^2} \prod_{i \in \cA} (V_i + v_i(\pi(i)))^{1/|\cA|}
\geq \frac{OPT}{20000 (c+1) d^2}.$$
Recall that at the end, we find a matching $\sigma:\cA \to \cH$ maximizing $\prod_{i \in \cA} v_i(T_i + \sigma(i))$. Therefore, the NSW value of our solution is at least as much as the one provided by the matching $\rho$, which is $\prod_{i \in \cA} (v_i(T_i + \rho(i))^{1/|\cA|} \geq \prod_{i \in \cA} (\max \{ v_i(T_i), v_i(\rho(i)) \})^{1/|\cA|} \geq \frac{1}{20000 (c+1) d^2} OPT$. This proves Theorem~\ref{thm:reduction}.

\section{Nash social welfare with subadditive valuations}
\label{sec:subadditive}

Here we explain how to use the general framework described in Section~\ref{sec:reduction} to obtain a constant-factor approximation for subadditive valuations, accessible by demand queries.

\begin{theorem}
\label{thm:subadditive}
There is a constant-factor approximation algorithm for the Nash social welfare problem with subadditive valuations, using polynomial running time and a polynomial number of queries to a demand oracle for each agent's valuation.
\end{theorem}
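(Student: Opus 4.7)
The strategy is to invoke the reduction of Theorem~\ref{thm:reduction} with constants $c, d = O(1)$, by instantiating the two required subroutines so that both can be executed in polynomial time using demand queries only. Once this is in place, Theorem~\ref{thm:reduction} will automatically yield an $O(cd^2) = O(1)$-approximation algorithm using one invocation of the Relaxation Solver and a logarithmic number of invocations of the Rounding Procedure.

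For the \textbf{Rounding Procedure}, the idea is to reduce to the utilitarian welfare setting. Given a feasible Configuration LP solution $(x_{i,S})$, I would rescale valuations to $w_i(S) = v_i(S)/V_i$, where $V_i = \sum_S v_i(S)\,x_{i,S}$. Each $w_i$ is still subadditive, $(x_{i,S})$ remains feasible under this rescaling, and its LP welfare under $w_i$ is exactly $|\cA|$, since $\sum_S w_i(S)\,x_{i,S}=1$ for every $i$. I would then apply Feige's rounding for subadditive social welfare from the Configuration LP \cite{Feige08} as a black box, which returns an integral allocation $(S_i)$ with each $S_i$ contained in a set of the support of $x_{i,\cdot}$ and with $\sum_i w_i(S_i) \geq \tfrac{1}{2}|\cA|$. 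This realises the Rounding Procedure with $d = O(1)$.

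For the \textbf{Relaxation Solver}, I would solve the Eisenberg--Gale-style convex relaxation
\[
    \max \sum_{i \in \cA'} \log\!\Bigl( \sum_{S \subseteq \cG'} v_i(S)\, x_{i,S} \Bigr)
\]
over the Configuration LP polytope restricted to $\cA'$ and $\cG'$. Despite the exponentially many variables, this can be solved to any prescribed accuracy by running the ellipsoid method on an appropriate dual formulation whose separation oracle reduces to one demand query per agent, in the same spirit as Feige's algorithm for the utilitarian Configuration LP; the details would be deferred to Appendix~\ref{app:solving-log-concave}. For the resulting near-optimal $(x_{i,S})$ with $V_i = \sum_S v_i(S)\,x_{i,S}$, the first-order (KKT) optimality conditions give that for every feasible direction $(y_{i,S})$ of the Configuration LP,
\[
    \sum_{i \in \cA'} \sum_{S \subseteq \cG'} \frac{v_i(S)}{V_i}\, y_{i,S} \le (1+o(1))\, |\cA'|.
\]
Specialising $(y_{i,S})$ to the indicator of an arbitrary integral allocation $(T^*_i)$ of $\cG'$ then produces $\sum_{i \in \cA'} v_i(T^*_i)/V_i \le c\, |\cA'|$ for some absolute constant $c$, which is precisely the property required by Theorem~\ref{thm:reduction}.

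The main obstacle I anticipate lies in the Relaxation Solver: handling a concave program with exponentially many variables through demand queries alone requires care, and one has to show that the \emph{approximately} optimal primal solution still witnesses the bound with an absolute constant on the right-hand side, rather than with a quantity depending on $|\cA|$ or $|\cG|$. Once this is established, the Rounding Procedure follows essentially immediately from the rescaling trick and \cite{Feige08}, and Theorem~\ref{thm:subadditive} is just a plug-in to Theorem~\ref{thm:reduction}.
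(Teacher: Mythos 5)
Your proposal is correct and follows essentially the same route as the paper: plug into Theorem~\ref{thm:reduction} by implementing the Rounding Procedure via the rescaling trick plus Feige's subadditive rounding (giving $d=O(1)$), and the Relaxation Solver by solving the Eisenberg--Gale relaxation with demand queries and extracting the welfare bound $\sum_i v_i(T^*_i)/V_i \le c|\cA'|$ from first-order optimality. The one point you flag as the main obstacle --- that the bound must survive non-differentiability of the concave extension and the fact that the solution is only approximately optimal --- is precisely what the paper's Lemma~\ref{lem:discrete-opt} and Corollary~\ref{cor:concave-opt} handle, yielding $c=2$.
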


Aside from our general reduction and the ability to solve the Eisenberg-Gale relaxation with demand queries, the main component that we need here is an implementation of a {\bf Rounding Procedure} for subadditive valuations, as described in Theorem~\ref{thm:reduction}. To our knowledge, there is only one such procedure known, which is rather intricate and forms the basis of Feige's ingenious $1/2$-approximation algorithm for maximizing social welfare with subadditive valuations \cite{Feige09}. We use it here as a black-box, which can be described as follows.

\begin{theorem}
\label{thm:Feige-rounding}
For any $\eps>0$, there is a polynomial-time algorithm, which given a fractional solution $(x_{i,S})$ of (\ref{eq:config-LP}) for an instance with subadditive valuations, produces a random allocation $(R_i: i \in \cA)$ such that for every agent, $R_i \subseteq S_i$ for some $S_i, x_{i,S_i} > 0$, and 
$$ \E[v_i(R_i)] \geq \left(\frac12-\eps \right) V_i,   \mbox{       where  } V_i = \sum_{S \subseteq \cG} v_i(S) x_{i,S}.$$
\end{theorem}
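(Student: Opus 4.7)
The plan is to instantiate Feige's rounding scheme for the Configuration LP, adapted so that the guarantee holds \emph{per agent} (not just in the sum). The starting point is independent sampling: for each agent $i$, draw a set $S_i$ from the distribution $\{x_{i,S}\}_S$. This gives $\E[v_i(S_i)] = V_i$ and ensures $S_i$ lies in the support of $x_{i,\cdot}$, so the structural requirement $R_i \subseteq S_i$ is automatic once we define $R_i$ as a surviving subset of $S_i$. The nontrivial content is contention resolution: many agents may sample the same item, and one must return a subset $R_i \subseteq S_i$ whose expected subadditive value is at least $(\tfrac12-\eps)\,v_i(S_i)$ conditional on $S_i$.

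The workhorse I would invoke is the following subadditive random-subset inequality: for subadditive $v$ and a random $R \subseteq S$ in which each element survives independently with probability at least $p$, we have $\E[v(R) \mid S] \geq p\,v(S)$. The case $p=\tfrac12$ is immediate from subadditivity and symmetry, writing $v(S) \leq v(R) + v(S\setminus R)$ and noting that $R$ and $S\setminus R$ are identically distributed; the general-$p$ version follows by a standard coupling to the $p=\tfrac12$ case. This lemma converts the problem into designing a contention resolution whose per-item survival probability (conditional on the agent's sampled set) is at least $\tfrac12-\eps$.

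The next and technically heaviest step is to produce such a contention resolution. Feige's idea, which I would reproduce, is a two-stage sampling: after forming $S_i$, draw a second independent copy $S_i'$ from $\{x_{i,S}\}$ (or from a slightly perturbed version that truncates $x_{i,S}$ values above a threshold to absorb the $\eps$ slack), and define the set of claimants of item $j$ using both stages together, with ties broken by independent uniform random variables. LP feasibility $\sum_i x_{ij}\leq 1$ and the independence across agents then lets one verify that, conditional on $j \in S_i$, the probability $j$ survives to agent $i$ is at least $\tfrac12-\eps$. Combining with the subadditive random-subset lemma yields $\E[v_i(R_i)\mid S_i] \geq (\tfrac12-\eps)\,v_i(S_i)$ and hence $\E[v_i(R_i)] \geq (\tfrac12-\eps) V_i$ after taking expectation over $S_i$.

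The main obstacle is the contention resolution analysis itself: unlike XOS, where one can leverage an additive supporting function to handle items independently, subadditive valuations do not decompose, so one cannot argue value agent-by-agent as a sum of surviving item contributions. The reduction to the per-item survival probability via the random-subset lemma is exactly what bridges this gap, but calibrating the truncation parameter (so that no single set in the support carries too much mass, which would spoil per-item survival probabilities while still preserving $(1-\eps)V_i$ of the expected fractional value) is the delicate step. Once the truncation level is fixed as a function of $\eps$, the remaining calculation is a routine union bound combined with the subadditivity lemma.
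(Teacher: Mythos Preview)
The paper does not prove this theorem; it cites Feige (Section~3.2.2, Theorem~3.9 of \cite{Feige09}) as a black box. Compared to Feige's actual argument, your sketch has a real gap: you reduce to a per-item survival guarantee and then invoke the subadditive random-subset lemma, but that lemma needs survival events to be \emph{independent} across items, and no natural contention-resolution scheme delivers this. Conditional on agent $i$'s sampled $S_i$, whether item $j$ survives depends on whether some other agent $i'$ drew a set containing $j$; since $\{j\in S_{i'}\}$ and $\{j'\in S_{i'}\}$ are determined by the same random draw $S_{i'}$, they are correlated, and hence so is survival of $j$ and $j'$ in $R_i$. Marginal survival $\ge \tfrac12$ is not enough for subadditive valuations: on $\{1,2,3\}$ with $v(\{k\})=v(\{k,l\})=1$ and $v(\{1,2,3\})=2$ (the standard subadditive non-XOS example), letting $R$ be a uniformly random pair with probability $3/4$ and $\emptyset$ otherwise gives each item marginal $\tfrac12$ but $\E[v(R)]=\tfrac34<1=\tfrac12\, v(S)$.

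Feige's actual rounding does not proceed via independent contention resolution. After discretizing so that each $x_{i,S}$ is a multiple of $1/q$, the argument makes a correlated choice of one set per agent and then applies a combinatorial splitting lemma for subadditive valuations to the contested items; the per-agent $\tfrac12$ guarantee comes from that lemma, not from any per-item survival bound. The two-stage-sampling mechanism you describe does not correspond to this and, as written, cannot produce the independence your random-subset lemma requires. (A secondary issue: the ``standard coupling'' extending the $p=\tfrac12$ lemma to general $p$ is also false for subadditive functions---the same three-item example violates $\E_p[v(R)]\ge p\, v(S)$ for every $p\in(\tfrac12,1)$---but this is moot given the independence problem.)
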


For the proof, we refer the reader to Section 3.2.2 of \cite{Feige09}, Theorem 3.9 and the summary of its proof which shows that every player receives expected value at least $(\frac12 - \eps) V_i$.

Now we are ready to prove Theorem~\ref{thm:subadditive}.

\begin{proof}
Considering Theorem~\ref{thm:reduction}, we want to show how to implement the {\bf Relaxation Solver} and {\bf Rounding Procedure} for subadditive valuations.

The {\bf Relaxation Solver} can be obtained applying standard convex optimization techniques to the (\ref{eqns:log-concave}) relaxation. As we discuss in more detail in Appendix~\ref{app:solving-log-concave}, we can compute the values and supergradients of the objective function using demand queries, and obtain an optimal solution satisfying the assumption of Lemma~\ref{cor:concave-opt} (with $f_i = v^+_i$, $\alpha=1)$, and hence 
$$ \sum_{i \in \cA} \frac{v^+_i(\bx^*_i)}{v^+_i(\bx_i)} \leq 2|\cA| $$
for every feasible solution $\bx^*$. Another way to interpret this condition is that for $V_i = v^+_i(\bx_i)$ and modified valuations defined as $w_i(S) = \frac{1}{V_i} v_i(S)$, there is no feasible solution $\bx^*$ achieving value $\sum_{i \in \cA} w^+_i(\bx^*_i) > 2 |\cA|$. In particular, the social welfare optimum with the valuations $(w_i: i \in \cA)$ is at most $2|\cA|$. Hence, we satisfy the {\bf Relaxation Solver}
assumptions with $c = 2$.

Next, we implement the {\bf Rounding Procedure}: Given a fractional solution $(x_{i,S})$, Theorem~\ref{thm:Feige-rounding} gives a procedure which returns a random allocation $(R_i: i \in \cA)$ such that $\E[v_i(R_i)] \geq (\frac12 - \eps) V_i = (\frac12 - \eps) \sum_S x_{i,S} v_i(S)$. This means that for the modified valuations, $w_i(S) = \frac{1}{V_i} v_i(S)$, we have $\E[w_i(R_i)] \geq \frac12 - \eps$, and $\sum_{i \in \cA} w_i(R_i) \geq (\frac12 - \eps) |\cA|$. Hence, we satisfy the {\bf Rounding Procedure} assumptions with $d = \frac{2}{1-2\eps}$. 

Finally, we apply Theorem~\ref{thm:reduction} with $c = 2$ and $d=\frac{2}{1-2\eps}$. We obtain a constant-factor approximation algorithm for the Nash social welfare problem with subadditive valuations accessible via demand queries. (The constant factor ends up being  $20000 (c+1) d^2 = 375000$ for $\eps=0.1$.)
\end{proof}

\bibliographystyle{alpha}
\bibliography{bibfile} 

\appendix

\section{The Eisenberg-Gale Relaxation}
\label{app:solving-log-concave}

We consider the following relaxation of the Nash Social Welfare problem similar to the relaxations in \cite{GargHV21, LiV21}.
\Aviad{Trying to rewrite the following sentence - is this what you meant?} \Jan{Yes, sounds good!} We remark that the application of~\eqref{eqns:log-concave} in the Nash Social Welfare algorithm excludes the items allocated in the initial matching; indeed we ignore those items for the analysis in this section. 

\Aviad{I think it would be better to directly present the general form of Eisenberg-Gale, and then say that for now we plug in $f+i = v_i^+$} \Jan{Okay.}
\begin{align*}
\tag{Eisenberg-Gale Relaxation}
\label{eqns:log-concave}
\max \quad & \sum_{i\in \cA} \log f_i(\bx_i) \\
& \quad \sum_{i \in \cA} x_{ij} \le 1 && \forall j \in \cG \\
& \quad x_{ij} \ge 0 &&  \forall i \in \cA, j \in \cG \\
\end{align*}
where $f_i$ is a suitable relaxation of the valuation function $v_i$ for each $i$.
In particular, we will use the concave extension, $f_i = v_i^+$:
\begin{align*}
\tag{Concave Extension}
\label{eqns:concave}
 v_i^+(\bx_i) := & \max \sum_{S\subseteq\cG} v_i(S) x_{i,S} : \\
 &  \sum_{S \subseteq \cG:j\in S} x_{i,S} \le x_{ij} && \forall j\in \cG \\
 & \ \ \sum_{S\subseteq\cG} x_{i,S} = 1 \\
 & \ \  x_{i, S} \ge 0 &&  \forall S\subseteq\cG
\end{align*}
Note that~\eqref{eqns:concave} is a linear program. The dual LP to (\ref{eqns:concave}) is
\begin{align*}
\tag{Dual LP}
\label{eqns:dual}
 v_i^+(\bx_i) = & \min q + \sum_{j \in \cG} p_j x_{ij}: \\
 &  q + \sum_{j \in S} p_j \ge v_i(S) && \forall S \subseteq \cG \\
  & p_j \ge 0 &&  \forall j \in \cG.
\end{align*}
From here, we can see that $v^+_i(\bx_i)$ is a minimum over a collection of linear functions, and hence a concave function.

\subsection{Solving the Eisenber-Gale Relaxation}
\label{sec:solving-EG}

Here we show how to solve the (\ref{eqns:log-concave}) using demand queries.

\begin{lemma}
\label{lem:solving-log-concave} 
Given demand oracles for $v_1, \cdots, v_n$, an optimal solution $\bx^*$ for (\ref{eqns:log-concave}) can be found within a polynomially small error in polynomial time. Moreover, the support of $\bx^*$ has size polynomial in $n$.
\end{lemma}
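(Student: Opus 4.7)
The plan is to solve (\ref{eqns:log-concave}) by a standard convex-optimization framework (e.g.\ the ellipsoid method or a cutting-plane method) applied to the outer concave maximization, where evaluation of the objective and a supergradient at any point $\bx=(x_{ij})$ is itself carried out by an inner ellipsoid run using demand queries. First, I would observe that $v_i^+$ is concave in $\bx_i$ (being the minimum of the linear functions $q+\sum_j p_j x_{ij}$ over the dual feasible region in (\ref{eqns:dual})), and hence $\log v_i^+$ is concave and so is the objective $\sum_{i\in\cA}\log v_i^+(\bx_i)$. The feasible region $\{\bx\geq 0: \sum_i x_{ij}\leq 1\}$ is an explicit polytope.

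The key subroutine is to compute, for a given $\bx_i$, both the value $v_i^+(\bx_i)$ and a supergradient. I would do this by running the ellipsoid method on the dual LP (\ref{eqns:dual}), which has only $|\cG|+1$ variables but exponentially many constraints $q+\sum_{j\in S} p_j \ge v_i(S)$. Separation over these constraints is exactly a demand query: given a candidate $(q,\bp)$, ask the demand oracle for $S^*\in\arg\max_S (v_i(S)-\sum_{j\in S} p_j)$; then $(q,\bp)$ is feasible iff $q\ge v_i(S^*)-\sum_{j\in S^*} p_j$, and otherwise $S^*$ yields a violated constraint. By standard ellipsoid-with-separation arguments, this finds an (approximately) optimal dual $(q^*,\bp^*)$ in polynomial time, yielding $v_i^+(\bx_i)=q^*+\sum_j p_j^* x_{ij}$; by the envelope theorem, $\bp^*$ is a supergradient of $v_i^+$ at $\bx_i$, so $\bp^*/v_i^+(\bx_i)$ contributes to a supergradient of $\log v_i^+$. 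Summing over $i$ gives a supergradient of the Eisenberg--Gale objective.

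With both value and supergradient oracles in hand, I would invoke the ellipsoid (or an interior-point) method on the outer concave program to produce an $\bx^*$ within any prescribed additive error $\epsilon$ of the optimum in time polynomial in $n,m,\log(1/\epsilon)$. For the support size claim, note that every dual LP solved in the inner loop is determined by the polynomially many constraints generated by demand queries; by LP duality, an optimal primal $(x_{i,S}^*)$ supported only on those sets exists and can be recovered by solving the restricted primal LP (which has polynomial size). Taking the union over $i$ of these supports yields an overall $\bx^*$ with $\mathrm{poly}(n,m)$-size support realizing the same value.

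The main obstacle I expect is the standard one of bit-complexity and degeneracy: $\log v_i^+(\bx_i)\to -\infty$ when some $v_i^+(\bx_i)\to 0$, so the outer objective is not Lipschitz on the whole feasible region. I would deal with this by restricting to $\bx_i$ with $v_i^+(\bx_i)$ bounded below by a polynomially small threshold (justified because agents with vanishing fractional value can be removed from $\cA'$ in the outer algorithm), and by carrying out the inner/outer ellipsoid analysis with polynomially many bits of precision so that the approximate supergradients suffice. The other routine but necessary check is that the approximate primal $(x_{i,S}^*)$ recovered from the restricted LP satisfies $\sum_i\sum_{S\ni j} x_{i,S}^*\le 1$ exactly and $\sum_S x_{i,S}^*=1$; this is immediate from feasibility of $\bx^*$ in (\ref{eqns:log-concave}) together with the concave-extension constraints in (\ref{eqns:concave}).
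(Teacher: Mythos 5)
Your proposal is correct and follows essentially the same route as the paper: concavity of $v_i^+$ via the dual LP, the demand oracle as separation oracle for that dual, the supergradient of $\log v_i^+$ obtained from the optimal dual prices, standard convex optimization for the outer problem, and the polynomial support bound via the restricted primal / complementary slackness. Your extra care about $\log v_i^+ \to -\infty$ matches how the paper later handles this (by adding constraints $x_{ij}\ge\eps$ in Corollary~\ref{cor:concave-opt}).
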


Since $v^+_i(\bx_i)$ is a nonnegative concave function, $\log v^+_i(\bx_i)$ is a concave function as well (wherever $v^+_i(\bx_i) > 0$). If we implement the evaluation and supergradient oracles for $\log v^+_i(\bx_i)$, then we can use standard techniques (see, e.g., \cite{DHHW23}) to maximize $\sum_{i \in \cA} \log v^+_i(\bx_i)$ over the convex polytope 
$$ P = \{ \bx \in \RR_+^{\cG \times \cA}: \forall j \in \cG, \sum_{i \in \cA} x_{ij} \leq 1\}.$$

The function $v^+_i(\bx_i)$ can be evaluated with polynomially many demand queries; this is well-known \cite{Feige08} and holds because the demand oracle happens to be the separation oracle for \eqref{eqns:dual}. Hence we can also evaluate $\log v^+_i(\bx_i)$. We focus here on the implementation of a supergradient oracle.

A supergradient of $\log v^+_i$ at a point $\bz$ is any linear function $L_i(\by)$ such that $L_i(\bz) = \log v^+_i(\bz)$ and $L_i(\by) \geq \log v^+_i(\by)$ everywhere. 
Given $\bz$, as a first step, we find a supergradient of $v^+_i$ itself: This can be done by solving the dual LP and finding $\alpha$ and $(p_j: j \in \cG)$ such that $v^+_i(\bz) = \alpha + \sum_{j \in \cG} p_j z_{j} = \alpha + \bp \cdot \bz$.  Since $v^+_i(\by)$ for every $\by$ is the minimum over such linear functions, we also have $v^+_i(\by) \leq  \alpha + \bp \cdot \by$ for all $\by$. Hence $\alpha + \bp \cdot \by$ is the desired supergradient at $\bz$. 

Next, we compute the gradient of $\log (\alpha + \bp \cdot \by)$ with respect to $\by$:
$$ \nabla \log (\alpha + \bp \cdot \by) = \frac{\nabla (\alpha + \bp \cdot \by)}{\alpha + \bp \cdot \by} = \frac{\bp}{\alpha + \bp \cdot \by}.$$
We claim that the linear approximation of $\log (\alpha + \bp \cdot \by)$ obtained by evaluating this gradient at $\bz$,
$$L_i(\by) = \log (\alpha + \bp \cdot \bz) + (\by-\bz) \cdot \nabla (\log (\alpha + \bp \cdot \by))|_{\bz} = (\alpha + \bp \cdot \bz) +  (\by-\bz) \cdot \frac{\bp}{\alpha + \bp \cdot \bz} $$
is a valid supergradient for $\log v^+_i(\by)$ at $\bz$. Indeed, we have $\log v^+_i(\bz)= \log (\alpha + \bp \cdot \bz) = L_i(\bz) $,  and for all $\by$,
$$ \log v^+_i(\by) \leq \log (\alpha + \bp \cdot \by) \leq (\alpha + \bp \cdot \bz) + (\by-\bz) \cdot \nabla (\log (\alpha + \bp \cdot \by))|_{\bz} = L_i(\by). $$
where the second inequality follows from the concavity of $\log (\alpha + \bp \cdot \by)$.

Hence, (\ref{eqns:log-concave}) can be solved in polynomial time, within a polynomially small error, using standard convex optimization techniques \cite{DHHW23}. In particular, we can find a point $\bx$ such that $\sum_{i \in \cA} \log v^+_i(\bx^*_i) \leq \sum_{i \in \cA} \log v^+_i(\bx_i) + \eps$ for every feasible solution $\bx^*$.

Finally, let's explain why the solution can be assumed to have polynomially bounded support. Given a fractional solution $x_{ij}$ (which has obviously polynomially bounded support), for each agent $i$, using demand queries we also obtain a solution of (\ref{eqns:dual}) certifying the value of $v^+_i(\bx_i)$. By complementary slackness, there is a matching primal solution of (\ref{eqns:log-concave}) which has nonzero variables corresponding to the tight constraints in (\ref{eqns:dual}) that define the dual solution. Since the dimension of (\ref{eqns:dual}) is polynomial, the number of such tight constraints is also polynomial. Hence we can assume that the number of nonzero variables in (\ref{eqns:log-concave}) is polynomial.

\subsection{Properties of the optimal solution}

Consider now the (\ref{eqns:log-concave}) in a general form, with objective functions $f_i$ (which could be equal to $v^+_i$ or perhaps some other extension of $v_i$).

\begin{align*}
\tag{Eisenberg-Gale Relaxation}
\max \quad & \sum_{i\in \cA} \log f_i(\bx_i) \\
& \quad \sum_{i \in \cA} x_{ij} \le 1 && \forall j \in \cG \\
& \quad x_{ij} \ge 0 &&  \forall i \in \cA, j \in \cG \\
\end{align*}

Suppose that $\bx$ is an optimal solution of this relaxation. We will need the following property, which is also stated in \cite{GHV20} in the context of general concave valuations (Lemma 4.1 in \cite{GHV20}). Our proof here is much simpler. 
First, we consider the case of differentiable concave $f_i$ which makes the proof cleaner. (Recall however that $v^+_i$ is not differentiable everywhere.)

\begin{lemma}
\label{lem:local-opt}
For an optimal solution $\bx$ of (\ref{eqns:log-concave}) with differentiable nonnegative monotone concave functions $f_i$, and any other feasible solution $\bx^*$, we have
$$ \sum_{i \in \cA} \frac{f_i(\bx^*_i)}{f_i(\bx_i)} \leq |\cA|. $$
\end{lemma}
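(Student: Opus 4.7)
The plan is a one-shot first-order optimality argument along the segment from $\bx$ to $\bx^*$. Define $\bx(t) = (1-t)\bx + t\bx^*$ for $t \in [0,1]$. Since the feasible polytope $P = \{\by \ge 0 : \sum_{i} y_{ij} \le 1 \ \forall j\}$ is convex and contains both $\bx$ and $\bx^*$, the segment lies in $P$, and $\bx(0) = \bx$ is optimal on $P$. The objective $g(t) := \sum_{i \in \cA} \log f_i(\bx_i(t))$ is concave in $t$ (as a sum of compositions of monotone concave functions of a nonnegative quantity with log, along an affine map), so optimality at $t=0$ yields the first-order condition $g'(0^+) \le 0$.

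Next I would compute $g'(0^+)$ using the chain rule, which is justified by differentiability of each $f_i$:
\[
g'(0^+) \;=\; \sum_{i \in \cA} \frac{\nabla f_i(\bx_i) \cdot (\bx^*_i - \bx_i)}{f_i(\bx_i)}.
\]
(Note: we may assume $f_i(\bx_i) > 0$ for all $i$, else the relaxation's objective is $-\infty$ and $\bx$ could not be optimal whenever there exists any solution with all $f_i > 0$; if no such solution exists, the inequality we want is vacuous in the natural $0/0 = 0$ convention.)

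The key inequality is then the standard concave upper-bound by the tangent plane, applied separately for each $i$:
\[
f_i(\bx^*_i) \;\le\; f_i(\bx_i) + \nabla f_i(\bx_i)\cdot(\bx^*_i - \bx_i).
\]
Dividing by $f_i(\bx_i) > 0$ and summing over $i$ gives
\[
\sum_{i \in \cA} \frac{f_i(\bx^*_i)}{f_i(\bx_i)} \;\le\; |\cA| \;+\; \sum_{i \in \cA} \frac{\nabla f_i(\bx_i)\cdot(\bx^*_i - \bx_i)}{f_i(\bx_i)} \;=\; |\cA| + g'(0^+) \;\le\; |\cA|,
\]
which is exactly the claim.

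The only subtlety I anticipate is the boundary behavior: if $\bx$ lies on the boundary of $P$ (which will typically be the case, since item constraints $\sum_i x_{ij} \le 1$ may be tight), I can only use the one-sided directional derivative into the polytope. But that is exactly the direction $\bx^* - \bx$ used above, so the first-order condition $g'(0^+) \le 0$ remains valid and the argument goes through unchanged. The subsequent generalization to non-differentiable $f_i = v_i^+$ (which the paper will presumably handle next) should follow by replacing $\nabla f_i(\bx_i)$ with an appropriate supergradient, exactly the supergradient $\bp$ constructed in Section~\ref{sec:solving-EG} via the dual LP.
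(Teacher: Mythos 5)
Your proposal is correct and follows essentially the same route as the paper: the concave tangent upper bound $f_i(\bx^*_i) \leq f_i(\bx_i) + \nabla f_i(\bx_i)\cdot(\bx^*_i-\bx_i)$, divided by $f_i(\bx_i)$ and summed, with the resulting term identified as the (one-sided) directional derivative of $\sum_i \log f_i$ at the optimum $\bx$ in the feasible direction $\bx^*-\bx$, hence nonpositive. Your extra care about the boundary case and positivity of $f_i(\bx_i)$ is a fine elaboration of what the paper leaves implicit.
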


\begin{proof}
Since $f_i(\bx)$ is a concave function, we have
$$ f_i(\bx^*_i) \leq f_i(\bx_i) + (\bx^*_i - \bx_i)  \cdot  \nabla f_i(\bx_i).$$
From here, we get
\[
\begin{aligned} \sum_{i \in \cA} \frac{f_i(\bx^*_i)}{f_i(\bx_i)} 
& \leq \sum_{i \in \cA} \frac{f_i(\bx_i) +  (\bx^*_i - \bx_i)  \cdot \nabla f_i(\bx_i)}{f_i(\bx_i)} = |\cA| + \sum_{i \in \cA} (\bx^*_i - \bx_i) \cdot \nabla (\log f_i(\bx_i)) \leq |\cA|
\end{aligned}
\]
using the fact that $\bx^*$ is feasible and $\bx$ is an optimum for the objective function $\sum_{i \in \cA} \log f_i(\bx_i)$.
\end{proof}

To deal with a more general situation where $f_i$ is not necessarily differentiable, and we don't find an exact optimum, we prove a robust version of this lemma.

\begin{lemma}
\label{lem:discrete-opt}
Let $f_i:[0,1]^{\cG} \to \RR$ for each $i \in \cA$ be nonnegative, monotone and concave. For $\eps>0$, let $\bx$ be an $\eps^4$-approximate solution of (\ref{eqns:log-concave}), in the sense that for every other feasible solution $\bx'$,
\begin{equation}
\label{eq:approx-opt}
\sum_{i \in \cA} \log f_i(\bx') \leq \sum_{i \in \cA} (\log f_i(\bx) + \epsilon^4).
\end{equation}
And suppose further that that $y_{ij} \geq \eps$ for all $i,j$, 
Then for every feasible solution $\bx^*$, we have
$$ \sum_{i \in \cA} \frac{f_i(\bx^*_i)}{f_i(\bx_i)} \leq (1 + 2 \eps) |\cA|. $$
\end{lemma}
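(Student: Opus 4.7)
The plan is to extend Lemma~\ref{lem:local-opt} to the non-differentiable, approximate-optimum regime via supergradients. For each $i$, pick a nonnegative supergradient $\mathbf{g}_i \in \partial f_i(\bx_i)$ (nonnegative by monotonicity). The supergradient inequality $f_i(\bx^*_i) \leq f_i(\bx_i) + \mathbf{g}_i \cdot (\bx^*_i - \bx_i)$, divided by $f_i(\bx_i) > 0$, shows that $\tilde{\mathbf{g}}_i := \mathbf{g}_i/f_i(\bx_i) \in \partial \log f_i(\bx_i)$. Summing,
$$\sum_{i \in \cA} \frac{f_i(\bx^*_i)}{f_i(\bx_i)} \;\leq\; |\cA| + \sum_{i \in \cA} \tilde{\mathbf{g}}_i \cdot (\bx^*_i - \bx_i),$$
so the task reduces to selecting supergradients for which the linear error term is at most $2\eps|\cA|$. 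A useful auxiliary consequence of concavity applied at $\bzero$ with $f_i(\bzero)=0$ is $\mathbf{g}_i \cdot \bx_i \leq f_i(\bx_i)$, i.e.\ $\tilde{\mathbf{g}}_i \cdot \bx_i \leq 1$, which aggregates to $\sum_i \tilde{\mathbf{g}}_i \cdot \bx_i \leq |\cA|$ and will be used below.

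The bound on the linear term rests on approximate KKT conditions derived from the interior hypothesis. Since $x_{ij} \geq \eps$, we sit in the interior of the nonnegativity constraints, so only the packing constraints $\sum_i x_{ij} \leq 1$ can bind. For any pair of agents $i, i'$ and item $j$, the swap perturbation $\bx \mapsto \bx + \delta(\be_{i'j} - \be_{ij})$ stays feasible for every $\delta \in [0, \eps]$. Applying the $\eps^4$-approximate optimality at $\delta=\eps$, and using that for concave $\log f_i$ the discrete difference quotients $\delta^{-1}(\log f_i(\bx_i \pm \delta \be_j) - \log f_i(\bx_i))$ upper-bound the appropriate one-sided directional derivatives in coordinate $j$, one obtains a pairwise discrepancy bound: the right derivative of $\log f_{i'}$ at $\bx_{i'}$ minus the left derivative of $\log f_i$ at $\bx_i$ in coordinate $j$ is at most $\eps^3|\cA|$. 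A selection argument inside the convex subdifferentials $\partial \log f_i(\bx_i)$ then yields supergradients $\tilde{\mathbf{g}}_i$ and a common nonnegative price vector $(\mu_j)_{j \in \cG}$ with $\tilde{g}_{ij}$ close to $\mu_j$ for every $(i,j)$; considering unilateral decreases of $x_{ij}$ at items whose packing constraint has slack yields the analogous approximate complementary slackness.

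Equipped with the approximate KKT, one estimates $\sum_i \tilde{\mathbf{g}}_i \cdot \bx^*_i = \sum_j \sum_i \tilde{g}_{ij} x^*_{ij} \leq \sum_j \max_i \tilde{g}_{ij}$ (using $\sum_i x^*_{ij} \leq 1$ and $\tilde{g}_{ij} \geq 0$), which is close to $\sum_j \mu_j$. The auxiliary bound $\sum_i \tilde{\mathbf{g}}_i \cdot \bx_i \leq |\cA|$ combined with $\tilde{g}_{ij} \approx \mu_j$ and approximate complementary slackness gives $\sum_j \mu_j \leq (1 + O(\eps))|\cA|$. Subtracting these two estimates and collecting the $O(\eps)$ error terms produces $\sum_i \tilde{\mathbf{g}}_i \cdot (\bx^*_i - \bx_i) \leq 2\eps|\cA|$, as required.

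The main obstacle is the non-differentiable KKT step: from the pairwise inequalities between one-sided directional derivatives, one must exhibit, for each agent, a single supergradient whose every coordinate is simultaneously close to the shared price $\mu_j$. The cleanest routes are either a separating-hyperplane or Helly-type argument inside $\partial \log f_i(\bx_i)$, or a smoothing of $f_i$ to a differentiable approximation on which Lemma~\ref{lem:local-opt} applies directly, followed by a limiting argument. The exponent $4$ in the $\eps^4$-optimality is calibrated precisely so that the cumulative pairwise errors stay within the $O(\eps)|\cA|$ slack; the interior condition $x_{ij} \geq \eps$ additionally implies $|\cA|\eps \leq 1$, which helps absorb accumulated errors.
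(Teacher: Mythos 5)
Your reduction to bounding $\sum_{i}\tilde{\mathbf{g}}_i\cdot(\bx^*_i-\bx_i)$ is sound, but the proof has a genuine gap exactly where you flag it: the ``selection argument'' producing, for each agent, a \emph{single} supergradient all of whose coordinates are simultaneously close to a common price vector $(\mu_j)$. The pairwise swap perturbations only control one-sided partial derivatives; for a concave $f_i$ these correspond to $\min_{g\in\partial f_i(\bx_i)}g_j$ and $\max_{g\in\partial f_i(\bx_i)}g_j$ coordinate by coordinate, and such per-coordinate constraints can each be met by \emph{some} element of the superdifferential while no single element meets all of them at once (think of a superdifferential equal to the segment from $(0,1)$ to $(1,0)$: each coordinate ranges over $[0,1]$, but no point has both coordinates at their extremes). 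You explicitly defer this step to a Helly-type, separating-hyperplane, or smoothing argument without carrying any of them out, so the proof is incomplete at its crux. A second, independent problem is the error accounting: each swap costs $\eps^3|\cA|$ (the $\eps^4|\cA|$ optimality slack divided by the step length $\eps$), and your estimate $\sum_i\tilde{\mathbf{g}}_i\cdot\bx^*_i\le\sum_j\max_i\tilde g_{ij}\approx\sum_j\mu_j$ accumulates one such error \emph{per item}, giving a total of order $\eps^3|\cA|\,|\cG|$. This is $O(\eps|\cA|)$ only if $|\cG|\le 1/\eps^2$, which you never establish and which fails for the parameters used downstream ($\eps\approx 1/|\cA|$ in Corollary~\ref{cor:concave-opt}, with $|\cG|$ unbounded in terms of $|\cA|$).

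The paper's proof avoids prices and supergradient selection entirely and is much shorter: it applies the approximate optimality \eqref{eq:approx-opt} directly to the feasible point $\bx+\frac1T(\bx^*-\bx)$. Concavity gives $f_i(\bx^*_i)-f_i(\bx_i)\le T\bigl(f_i(\bx_i+\frac1T(\bx^*_i-\bx_i))-f_i(\bx_i)\bigr)$; the interiority $x_{ij}\ge\eps$ guarantees that the relative change $r_i$ of $f_i$ along this small step satisfies $|r_i|\le\frac{1}{T\eps}$, so the elementary bound $\log(1+r_i)\ge r_i-\frac{1}{T^2\eps^2}$ converts the ratio sum into the log-objective difference, which \eqref{eq:approx-opt} controls; choosing $T=1/\eps^3$ balances the two error terms and yields $(1+2\eps)|\cA|$. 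If you wish to salvage your route you must make the subdifferential selection rigorous \emph{and} redo the error bookkeeping so that it does not scale with $|\cG|$; as written, neither issue is resolved.
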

Note that we must necessarily have $\epsilon \leq 1/|\cA|$, because $1 \geq \sum_{i \in \cA} x_{ij} \geq \eps |\cA|$.

\begin{proof}
Let $\bx$ satisfy the assumptions of the lemma. For any feasible $\bx^*$ and $T \geq 1$, using the concavity of $f_i$, we can write
$$ f_i(\bx^*_i) - f_i(\bx_i) \leq T (f_i(\bx_i + \frac{1}{T} (\bx^*_i - \bx_i)) - f_i(\bx_i)).$$
From here,
$$ \sum_{i \in \cA} \frac{f_i(\bx^*_i) - f_i(\bx_i)}{f_i(\bx_i)} \leq T \sum_{i \in \cA} \frac{f_i(\bx_i + \frac{1}{T} (\bx^*_i - \bx_i)) - f_i(\bx_i)}{f_i(\bx_i)}.$$
Note that since $y_{ij} \geq \eps$, we have $\bx_i + \frac{1}{T} (\bx^*_i - \bx_i) \leq \bx_i + \frac{1}{T} \b1 \leq (1 + \frac{1}{T \eps}) \bx_i$. Also, $f_i(0) \geq 0$, so by monotonicity and concavity, $f_i(\bx_i + \frac{1}{T} (\bx^*_i - \bx_i)) \leq f_i((1+\frac{1}{T\eps}) \bx_i) \leq (1+\frac{1}{T \eps}) f_i(\bx_i)$. Similarly, $f_i(\bx_i + \frac{1}{T} (\bx^*_i - \bx_i)) \geq f_i(\bx_i - \frac{1}{T} \b1) \geq (1-\frac{1}{T \eps}) f_i(\bx_i)$. Hence the ratio $r_i = \frac{f_i(\bx_i + \frac{1}{T} (\bx^*_i - \bx_i)) - f_i(\bx_i)}{f_i(\bx_i)}$ is at most $\delta = \frac{1}{T \eps}$ in absolute value, and we can use the following elementary approximation:
$$ r_i-\delta^2 \leq \log (1+r_i) \leq r_i. $$
Plugging into the inequality above, we obtain
$$ \sum_{i \in \cA} \frac{f_i(\bx^*_i) - f_i(\bx_i)}{f_i(\bx_i)} 
\leq T \sum_{i \in \cA} r_i \leq T \sum_{i \in \cA} ( \delta^2 + \log (1+r_i)) 
= \frac{|\cA|}{T \eps^2} + T \sum_{i \in \cA} \log \frac{f_i(\bx_i + \frac{1}{T} (\bx^*_i - \bx_i))}{f_i(\bx_i)}.$$
Applying the assumption of the lemma to the feasible solution $\bx' = \bx_i + \frac{1}{T} (\bx^*_i - \bx_i)$, 
we have $ \sum_{i \in \cA} \log \frac{f_i(\bx_i + \frac{1}{T} (\bx^*_i - \bx_i))}{f_i(\bx_i)} \leq \epsilon^4 |\cA|,$
which gives
$$ \sum_{i \in \cA} \frac{f_i(\bx^*_i)}{f_i(\bx_i)} = |\cA| + \sum_{i \in \cA} \frac{f_i(\bx^*_i) - f_i(\bx_i)}{f_i(\bx_i)} \leq |\cA| + \frac{|\cA|}{T \eps^2} + T \eps^4 |\cA|.$$
We set $T$ to equate the last two terms: $T = 1 / \eps^3$, which gives the statement of the lemma.
\end{proof}

\begin{corollary}
\label{cor:concave-opt}
\Aviad{I'm not sure what ``fixed'' $\alpha$ means. Maybe we should say that the algorithm runs in time $poly(n,\alpha)$?}
Given a value oracle and a supergradient oracle for each $f_i$, for any constant $\alpha>0$, we can find a solution $\bx$ of (\ref{eqns:log-concave}) in polynomial time such that for any feasible solution $\bx^*$,
$$ \sum_{i \in \cA} \frac{f_i(\bx^*_i)}{f_i(\bx_i)} \leq (1 + \alpha) |\cA|. $$
\end{corollary}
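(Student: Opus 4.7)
The plan is to invoke Lemma~\ref{lem:discrete-opt} with an appropriately chosen $\eps$. Set $\eps = \min(\alpha/2,\, 1/|\cA|)$: then the factor $(1+2\eps)|\cA|$ guaranteed by Lemma~\ref{lem:discrete-opt} is at most $(1+\alpha)|\cA|$ -- directly if $\eps = \alpha/2$, and via $|\cA|+2 \leq (1+\alpha)|\cA|$ (which holds whenever $|\cA| \geq 2/\alpha$) if $\eps = 1/|\cA|$. The bound $\eps \leq 1/|\cA|$ is also exactly what is needed for the restricted polytope $P_\eps = \{\bx \in \RR_+^{\cA \times \cG} : \sum_{i \in \cA} x_{ij} \leq 1, \ x_{ij} \geq \eps\}$ to be non-empty: the uniform allocation $x_{ij} = 1/|\cA|$ lies in it.

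Having fixed $\eps$, I would compute a point $\bx \in P_\eps$ that is approximately optimal for $F(\bx) = \sum_{i \in \cA} \log f_i(\bx_i)$, using standard convex-optimization machinery. The value oracle for $\log f_i$ is just the composition of $\log$ with the value oracle for $f_i$, and a supergradient of $\log f_i$ at a point $\bx$ is obtained from a supergradient $g$ of $f_i$ at $\bx$ by dividing, $g/f_i(\bx)$ (a chain-rule check via $\log(1+z) \leq z$ confirms this is a supergradient). An ellipsoid-type or interior-point method applied to the concave maximization of $F$ over $P_\eps$ (with a trivial separation oracle that checks $x_{ij} \geq \eps$ and $\sum_i x_{ij} \leq 1$) then produces $\bx \in P_\eps$ whose $F$-value is within any desired polynomially-small additive error of $\sup_{P_\eps} F$, in time polynomial in $|\cA|$, $|\cG|$, and $1/\alpha$. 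Applying Lemma~\ref{lem:discrete-opt} to this $\bx$ then yields the stated bound for every feasible $\bx^*$.

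The part I expect to require the most care is certifying the second hypothesis of Lemma~\ref{lem:discrete-opt}, namely that $\bx$ is $\eps^4|\cA|$-approximately optimal over the \emph{whole} polytope $P$, and not just over the shrunk $P_\eps$. The naive perturbation of a near-optimum $\bx^{(0)} \in P$ toward the uniform allocation, $\bx = (1-\eps|\cA|)\bx^{(0)} + \eps|\cA| \cdot (1/|\cA|)\b1$, loses only $O(\eps|\cA|^2)$ in $F$ by concavity together with $f_i(\bzero) \geq 0$, yet this loss may exceed the slack $\eps^4|\cA|$ permitted by Lemma~\ref{lem:discrete-opt}. To close the gap I would either augment the objective by a small barrier $\nu \sum_{i,j} \log x_{ij}$ so that interior-point iterates stay bounded away from the boundary while perturbing $F$ by only $O(\nu)$, or extract approximate Lagrange multipliers associated with $\bx$ and feed them directly into the bound $|\cA| + 2\sqrt{|\cA|\gamma}/\eps$ that emerges from the proof of Lemma~\ref{lem:discrete-opt} when the $F$-gap $\gamma$ is tracked explicitly. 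Either route certifies the local-neighborhood inequality $F(\bx + \tfrac{1}{T}(\bx^*-\bx)) \leq F(\bx) + \eps^4|\cA|$ that the lemma's proof actually uses, and yields the claimed polynomial-time algorithm achieving the factor $(1+\alpha)|\cA|$.
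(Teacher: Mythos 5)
Your reduction to Lemma~\ref{lem:discrete-opt} is the right starting point, and your diagnosis of where the difficulty lies is accurate, but the resolutions you propose do not close the gap. The target you set yourself --- a point $\bx$ with $x_{ij}\geq\eps$ that is $\eps^4|\cA|$-approximately optimal for $\sum_i \log f_i$ over the \emph{whole} polytope $P$ --- does not exist in general, so no barrier or Lagrange-multiplier machinery can produce it. Take $f_i$ additive with all its weight on a single item $j_i$, distinct across agents: any $\bx$ with $x_{ij}\geq\eps$ has $x_{ij_i}\leq 1-(|\cA|-1)\eps$, so $\sum_i \log f_i(\bx_i)$ falls short of the unconstrained optimum by $\Theta(\eps|\cA|^2)$, which dwarfs the permitted slack $\eps^4|\cA|$ for every $\eps\le 1$. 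Your pivot to the ``local-neighborhood inequality'' is the right instinct --- the proof of Lemma~\ref{lem:discrete-opt} only evaluates the optimality hypothesis at $\bx+\frac{1}{T}(\bx^*-\bx)$ --- but for $\bx^*\in P$ outside the shrunk polytope this point has coordinates only $\geq(1-\frac{1}{T})\eps$, so it escapes $P_\eps$ and the near-optimality you certified over $P_\eps$ does not directly apply; you would still need a Lipschitz or re-shrinking argument, which you have not supplied.

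The paper closes the gap differently and more simply: it applies Lemma~\ref{lem:discrete-opt} entirely within $P_\eps$ (so the ratio bound $(1+2\eps)|\cA|$ holds for all benchmarks $\bx'\in P_\eps$), and then transfers to an arbitrary $\bx^*\in P$ by perturbing the \emph{benchmark}, taking $\bx'=(1-\eps|\cA|)\bx^*+\eps|\cA|\cdot\frac{1}{|\cA|}\b1\in P_\eps$ and using $f_i(\bx'_i)\geq(1-\eps|\cA|)f_i(\bx^*_i)$. The loss is thus a multiplicative $(1-\eps|\cA|)$ on each value $f_i$, i.e.\ a factor $\frac{1}{1-\eps|\cA|}$ on the final ratio sum, rather than an additive $\Theta(\eps|\cA|^2)$ in the log-objective. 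This yields $\sum_i f_i(\bx^*_i)/f_i(\bx_i)\leq\frac{(1+2\eps)|\cA|}{1-\eps|\cA|}$ and forces $\eps=\frac{\alpha}{2+(1+\alpha)|\cA|}=\Theta(\alpha/|\cA|)$; your choice $\eps=\min(\alpha/2,1/|\cA|)$ is too large for this step, since at $\eps=1/|\cA|$ the factor $\frac{1}{1-\eps|\cA|}$ blows up. The routine parts of your write-up (the supergradient of $\log f_i$ via $g/f_i(\bx)$, non-emptiness of $P_\eps$, the convex-optimization black box) are fine.
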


\begin{proof}
For $\eps>0$ (to be chosen at the end), we run a convex optimization algorithm on (\ref{eqns:log-concave}) with the additional constraint that $x_{ij} \geq \eps$, to obtain a solution $\bx$ such that for any feasible $\bx'$ satisfying the same constraint, we have
$$ \sum_{i \in \cA} \log f_i(\bx_i) \geq \sum_{i \in \cA} \log f_i(\bx') - \eps^4 n.$$
By Lemma~\ref{lem:discrete-opt}, this solution also satisfies
$$ \sum_{i \in \cA} \frac{f_i(\bx'_i)}{f_i(\bx_i)} \leq (1+2\eps) n.$$    
Finally, note that every feasible solution $\bx^*$ of (\ref{eqns:log-concave}) can be modified to obtain a feasible solution $\bx' = (1-\eps n) \bx^* + \eps n \cdot \frac{1}{n} \b1$ which satisfies the constraint $x'_{ij} \geq \eps$, and we have $f_i(\bx'_i) \geq (1 - \eps n) f_i(\bx^*_i)$ for every $i \in \cA$. 
Therefore, our solution also satisfies
$$ \sum_{i \in \cA} \frac{f_i(\bx^*_i)}{f_i(\bx_i)} \leq \frac{(1+2\eps) n}{1-\eps n}.$$    
For $\eps = \frac{\alpha}{2+(1+\alpha)n}$, we obtain the desired statement.
\end{proof}

\section{Rematching lemmas}
\label{app:rematching}

Here we prove the rematching lemmas from Section~\ref{sec:rematching}. These are essentially identical to lemmas in previous work on Nash social welfare, only reformulated in a way convenient for our presentation. We give self-contained proofs here for completeness.

\begin{proof}[Proof of Lemma~\ref{lemma:extension}]
Suppose that $S^*_i = H^*_i \cup T^*_i$ where $H^*_i \subseteq \cH$ and $T^*_i \subseteq \cG'$. We define a matching $\pi$ as follows: For each nonempty $H^*_i$, let $\pi(i)$ be the item of maximum value (as a singleton) in $H^*_i$. For $H^*_i=\emptyset$, let $\pi(i)$ be an arbitrary item in $\cH$ not selected as $\pi(i')$ for some other agent. (Since $|\cH| = |\cA|$, we can always find such items.) Recall that $\cA'$ are the agents who get positive value from $\cG'$; in particular, we can assume $T^*_i = \emptyset$ for $i \notin \cA'$.
Then we have, using monotonicity and subadditivity
\begin{eqnarray*}
\prod_{i \in \cA} \frac{v_i(S^*_i)}{\max\{V_i, v_i(\pi(i))\}} 
& \leq & \prod_{i \in \cA} \frac{v_i(T^*_i) + v_i(H^*_i)}{\max\{V_i,v_i(\pi(i))\}}
 \leq \prod_{i \in \cA} \frac{v_i(T^*_i) + |H^*_i| v_i(\pi(i)))}{\max\{V_i,v_i(\pi(i))\}} \\
& \leq & \prod_{i \in \cA'} \left( \frac{v_i(T^*_i)}{V_i} + |H^*_i| \right) \prod_{i \in \cA \setminus \cA'} |H^*_i|.
\end{eqnarray*}
Here we use the AMGM inequality:
\begin{eqnarray*}
\left( \prod_{i \in \cA'} \left( \frac{v_i(T^*_i)}{V_i} + |H^*_i| \right) \prod_{i \in \cA \setminus \cA'} |H^*_i| \right)^{1/|\cA|}
 \leq \frac{1}{|\cA|} \left( \sum_{i \in \cA'} \left( \frac{v_i(T^*_i)}{V_i} + |H^*_i| \right)
 + \sum_{i \in \cA \setminus \cA'} |H^*_i|  \right)
 \leq c + 1
\end{eqnarray*}
where the last inequality is by assumption and the fact that $\sum_{i \in \cA} |H^*_i| \leq |\cH| = |\cA|$. 
\end{proof}

\begin{proof}[Proof of Lemma~\ref{lemma:rematching}]
Let $\tilde{A} = \{i \in \cA: W_i < \max \{ v(\pi(i)), \nu_i \}$.
We define a directed bipartite graph $B$ between $\tilde{A}$ and $\cH$, with two types of edges:
$E_\tau = \{ (\tau(i), i): i \in \tilde{A}\}$ and $E_\pi = \{ (i,\pi(i): i \in \tilde{A} \}$.
We also define:
\begin{itemize}
    \item $A_\nu = \{ i \in \tilde{A}: \nu_i > v_i(\pi(i)) \} $,
    \item $A_\tau = A_\nu \cup \{ i \in \tilde{A}: \exists \mbox{ directed path in } B \mbox{ from } i \mbox{ to } A_\nu \}$,
    \item $A_\pi = \tilde{A} \setminus A_\tau$.     
\end{itemize}

We define a matching $\rho$ as follows;
\begin{itemize}
    \item For $i \in A_\tau$, $\rho(i) := \tau(i)$,
    \item For $i \in A_\pi$, $\rho(i) := \pi(i)$.
    \item For $i \notin \tilde{A}$, we define $\rho(i)$ arbitrarily, to make $\rho$ a matching.
\end{itemize}

First, observe that this is indeed a matching:
If it was the case that $\tau(i) = \pi(i') = j$ for some $i \in A_\tau, i' \in A_\pi$,
then we would have edges $(i',j)$ and $(j, i)$ in the graph, and since there is a directed path from $i$ to $A_\nu$ ($i \in A_\tau$), there would also be a directed path from $i'$ to $A_\nu$, contradicting the fact that $i' \in A_\pi$. Hence, $\rho$ is a matching.

Also, it's easy to see that for every

Next, we analyze the value guarantee for $\rho$:
$$ \prod_{i \in \cA} \max\{W_i, v_i(\rho(i)) \} \geq 
\prod_{i \in \cA \setminus \tilde{A}} W_i \prod_{i \in \tilde{A}} v_i(\rho(i)) = 
\prod_{i \in \cA \setminus \tilde{A}} W_i \prod_{i \in A_\tau} v_i(\tau(i)) \prod_{i \in A_\pi} v_i(\pi(i)). $$
We claim that $\prod_{i \in A_\tau} v_i(\tau(i)) \geq \prod_{i \in A_\nu} \nu_i \prod_{i \in A_\tau \setminus A_\nu} v_i(\pi(i))$. Observe that the vertices of $A_\tau$ can be covered disjointly by directed paths that terminate in $A_\nu$ (from each vertex of $A_\tau$, there is such a path and it is also unique, because the in-degrees / out-degrees in the graph are at most $1$). Let $P$ denote the $A_\tau$-vertices on some directed path like this, and let $s$ be its last vertex (in $A_\nu$). If it was the case that $\prod_{i \in P} v_i(\tau(i)) < \nu_s \prod_{i \in P \setminus \{s\}} v_i(\pi(i)) $, then we could modify the matching $\tau$ by swapping its edges on $P$ for the $\pi$-edges from $P \setminus \{s\}$, and finally an element of value $\nu_i$ for $s$ (since this item is outside of $\cH$ and hence available). This would increase the value of the matching $\tau$, which was chosen to be optimal, so this cannot happen.

It follows that $\prod_{i \in P} v_i(\tau(i)) \geq \nu_s \prod_{i \in P \setminus \{s\}} v_i(\pi(i))$     for every maximal directed path terminating in $A_\nu$, and since these paths cover $A_\tau$ disjointly, by combining all these inequalities we obtain 
$$ \prod_{i \in A_\tau} v_i(\tau(i)) \geq \prod_{i \in A_\nu} \nu_i \prod_{i \in A_\tau \setminus A_\nu} v_i(\pi(i)).$$
Substituting this into the inequality above,
$$ \prod_{i \in \cA} \max\{W_i, v_i(\rho(i)) \} \geq 
\prod_{i \in \cA \setminus \tilde{A}} V_i \prod_{i \in A_\nu} \nu_i \prod_{i \in A_\tau} v_i(\pi(i)) = \prod_{i \in \cA} \max \{ W_i, \nu_i, v_i(\pi(i)) \}.$$
\end{proof}
\section{Concentration of subadditive functions}
\label{app:concentration}

Let us start with a simple lower bound on the expected value of a random set with independently sampled elements.

\begin{lemma}
\label{lemma:subadditive-exp}
If $f:2^M \to \RR_+$ is a monotone subadditive function and $R$ is a random subset of $S$ where each element appears independently with probability $1/k$, $k\geq 1$ integer, then
$$ \E[f(R)] \geq \frac{1}{k} f(S).$$ 
\end{lemma}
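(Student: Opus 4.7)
The plan is to exhibit $R$ as one of $k$ ``pieces'' of a random partition of $S$ and then apply subadditivity across the pieces. First, I would construct a random partition $R_1, \dots, R_k$ of $S$ by assigning each element $j \in S$ independently and uniformly to one of the $k$ bins. Because the bin-choices are independent across elements, for any fixed $\ell \in \{1,\dots,k\}$, the indicators $\mathbf{1}[j \in R_\ell]$ are mutually independent across $j \in S$ and each has probability $1/k$. Hence every $R_\ell$ has the same distribution as $R$.

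Next, I would use (iterated) subadditivity: since $f$ is subadditive and $\bigcup_{\ell=1}^k R_\ell = S$, we have
\[
\sum_{\ell=1}^{k} f(R_\ell) \;\geq\; f\!\left(\bigcup_{\ell=1}^{k} R_\ell\right) \;=\; f(S).
\]
Taking expectations on both sides and using the fact that each $R_\ell$ is distributed like $R$, we obtain $k\cdot \E[f(R)] \geq f(S)$, i.e.\ $\E[f(R)] \geq \tfrac{1}{k} f(S)$, as desired.

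There is no real obstacle; the only thing to double-check is that the marginal distribution of each bin $R_\ell$ under the random partition really coincides with the ``each element included independently with probability $1/k$'' distribution that defines $R$. This holds precisely because the assignment of each element to a bin is independent across elements, so the event $\{j \in R_\ell\}$ depends only on the choice for element $j$, making these events mutually independent over $j$. Monotonicity of $f$ is not even needed here; subadditivity alone suffices.
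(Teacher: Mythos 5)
Your proof is correct and is essentially identical to the paper's: both use a random coloring of $S$ into $k$ independent uniform classes, observe that each class is distributed as $R$, and apply iterated subadditivity to the partition. Your side remark that monotonicity is not needed is also accurate.
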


\begin{proof}
Consider a random coloring of $S$, where every element $j \in S$ receives independently a random color $c(j) \in [k]$. Defining $S_\ell = \{ j \in S: c(j) = \ell \}$,
we see that each set $S_\ell$ has the same distribution as the set $R$ in the Lemma.
Therefore,
$$ \E[f(R)] = \E[f(S_1)] = \ldots = \E[f(S_k)] =  \E\left[ \frac{1}{k} \sum_{\ell=1}^{k} f(S_\ell) \right] \geq \frac{1}{k} f(S) $$
by subadditivity.
\end{proof}

This property is similar to the properties of {\em submodular} or {\em self-bounding functions}, which satisfy very convenient concentration bounds (similar to additive functions). Unfortunately, the same bounds are not true for subadditive functions; however, some concentration results can be recovered with a loss of certain constant factors.

Here we state a powerful concentration result presented by Schechtman \cite{Schechtman99}, based on the ``$q$-point control'' concentration inequality by Talagrand \cite{Talagrand89,Talagrand95}. We state it here in a simplified form suitable for our purposes.

\begin{theorem}
\label{thm:Schechtman}
Let $f:2^M \to \RR_+$ be a monotone subadditive function, where $f(\{i\}) \leq 1$ for every $i \in M$. Then for any real $a > 0$ and integers $k,q \geq 1$, and a random set $R$ from a product distribution,
$$ \Pr[f(R) \geq (q+1) a + k] \ (\Pr[f(R) \leq a])^q \leq \frac{1}{q^k}.$$
\end{theorem}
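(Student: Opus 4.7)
The plan is to reduce the inequality to Talagrand's $q$-point control principle via a subadditive covering argument. First, invoke independence: draw $q+1$ i.i.d.\ copies $R, R_1, \ldots, R_q$ from the underlying product distribution on $2^M$, so that the left-hand side is just the joint probability
$$\Pr\bigl[f(R) \geq (q+1)a + k \text{ and } f(R_i) \leq a \text{ for every } i \in [q]\bigr].$$

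Next, use monotone subadditivity together with the normalization $f(\{j\}) \leq 1$ to convert the ``value gap'' event into a purely combinatorial ``missing coordinates'' event. If every $f(R_i) \leq a$ and $f(R) \geq (q+1)a + k$, then monotonicity and iterated subadditivity give
$$f(R) \leq f\Bigl(\bigcup_{i=1}^q R_i\Bigr) + f\Bigl(R \setminus \bigcup_{i=1}^q R_i\Bigr) \leq \sum_{i=1}^q f(R_i) + \Bigl|R \setminus \bigcup_{i=1}^q R_i\Bigr| \leq qa + \Bigl|R \setminus \bigcup_{i=1}^q R_i\Bigr|,$$
using $f(R \setminus \bigcup R_i) \leq |R \setminus \bigcup R_i|$ from the singleton bound. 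Hence the assumed event forces $|R \setminus \bigcup_i R_i| \geq a+k \geq k$, and it suffices to prove
$$\Pr\Bigl[R_1, \ldots, R_q \in A \text{ and } \bigl|R \setminus \bigcup_i R_i\bigr| \geq k\Bigr] \leq \frac{1}{q^k},$$
where $A = \{S \subseteq M : f(S) \leq a\}$.

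The final step is to invoke Talagrand's $q$-point control inequality in the Boolean product space $\{0,1\}^M$. Identifying subsets with their indicators, the event ``$j \in R \setminus \bigcup_i R_i$'' is exactly a coordinate on which $R$ disagrees with all of $R_1, \ldots, R_q$, and the remaining bound is the special case of Talagrand's inequality asserting that, for any product measure $\mu$, any measurable set $A$, and any $q, k \geq 1$, the joint probability that $q$ independent samples land in $A$ while a further independent sample disagrees with all of them on at least $k$ coordinates is at most $\mu(A)^q / q^k \leq q^{-k}$. The main obstacle is Talagrand's inequality itself: I would cite it from \cite{Talagrand89,Talagrand95,Schechtman99} rather than reprove it, but the standard argument is an induction on $|M|$ that at each step conditions on the last coordinate, splits $A$ according to its value there, applies H\"older's inequality across the two cases, and optimizes; the factor $1/q$ per missing coordinate arises because each of the $q$ independent samples provides one independent ``chance'' to cover that coordinate.
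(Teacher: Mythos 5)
Your subadditive decomposition in the middle step is exactly the right idea, but the final reduction to Talagrand is broken because you quantify over the $q$ points of $A$ in the wrong way. Talagrand's $q$-point control inequality bounds $\E[q^{h(R)}]\,\mu(A)^q\le 1$, where $h(x)=\min_{y^1,\dots,y^q\in A}|\{j: x_j\notin\{y^1_j,\dots,y^q_j\}\}|$ is a minimum over \emph{all} $q$-tuples in $A$; the event $h(R)\ge k$ requires $R$ to escape \emph{every} $q$-tuple of $A$ in at least $k$ coordinates. Your event only requires $R$ to escape one particular \emph{random} $q$-tuple $(R_1,\dots,R_q)$, which is a far weaker condition on $R$, and the ``special case'' you invoke --- that $\Pr[R_1,\dots,R_q\in A \mbox{ and } |R\setminus\bigcup_i R_i|\ge k]\le \mu(A)^q/q^k$ --- is not Talagrand's inequality and is false. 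Concretely, take $f\equiv 0$ (monotone, subadditive, singletons $\le 1$), $a=1$, so $A=2^M$, and let each of $m$ items lie in each sampled set independently with probability $1/2$; then $|R\setminus\bigcup_{i=1}^q R_i|$ is Binomial$(m,2^{-(q+1)})$, so for $m\gg k\,2^{q+1}$ your ``suffices to prove'' statement asserts that an event of probability close to $1$ has probability at most $q^{-k}$. (The theorem itself is vacuous there because $\Pr[f(R)\ge (q+1)a+k]=0$; the problem is that your reduction discards that information once it passes to the combinatorial event.)

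The fix is small and makes the i.i.d.\ copies unnecessary: your chain $f(R)\le \sum_i f(S_i)+|R\setminus\bigcup_i S_i|\le qa+|R\setminus\bigcup_i S_i|$ is valid for \emph{every} $S_1,\dots,S_q\in A$, not just for random ones, so $f(R)\ge (q+1)a+k$ deterministically implies $h(R)\ge k$ (note that $|R\setminus\bigcup_i S_i|$ lower-bounds the number of coordinates on which $R$ disagrees with all the $S_i$). Markov's inequality applied to $\E[q^{h(R)}]\le \mu(A)^{-q}$ then gives $\Pr[f(R)\ge(q+1)a+k]\le\Pr[h(R)\ge k]\le q^{-k}\mu(A)^{-q}$, and multiplying by $\mu(A)^q=(\Pr[f(R)\le a])^q$ yields the theorem. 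For comparison, the paper does not reprove the inequality at all: it cites Corollary 12 of Schechtman after checking that the hypotheses are met ($1$-Lipschitzness follows from monotonicity plus the singleton bound, together with the extension of $f$ to partial configurations). Your argument, once corrected as above, is the more self-contained route and essentially reproduces Schechtman's derivation from Talagrand.
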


This statement can be obtained from Corollary 12 in \cite{Schechtman99} by extending the definition of $f$ to $\Omega^* = \bigcup_{I \subset M} 2^I$ simply by saying $f_I(S) = f(S)$ for all $S \subseteq I$. Also, we identify $2^I$ with $\{0,1\}^I$ in a natural way. Assuming $f(\{i\}) \leq 1$ means that $0 \leq f(S+i) - f(S) \leq 1$ for any set $S$, by monotonicity and subadditivity. Therefore, $f$ is $1$-Lipschitz with respect to the Hamming distance, as required in \cite{Schechtman99}. The statement holds for any product distribution, i.e.~a random set $R$ where elements appear independently.

Note that Theorem~\ref{thm:Schechtman} refers to tails on both sides and hence is more convenient to use with the median of $f$ than the expectation. The next lemma shows that this is not a big issue, since the theorem also implies that the median and expectation must be within a constant factor.

\begin{definition}
We define the median of a random variable $Z$ as any number $\med(Z) = m$ such that
$$ \Pr[Z \leq m] \geq 1/2, \ \ \ \Pr[Z \geq m] \geq 1/2.$$
\end{definition}

For any nonnegative variable, obviously $\E[Z] \geq \frac12 \med(Z)$.
For subadditive functions of independent random variables, we also get a bound in the opposite direction.

\begin{lemma}
\label{lemma:median}
Let $f:2^M \to \RR_+$ be a monotone subadditive function, where $f(\{i\}) \leq 1$ for every $i \in M$. Then for a random set $R$ from a product distribution,
$$ \E[f(R)] \leq 5(\med(f(R))+1).$$
\end{lemma}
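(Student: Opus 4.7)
The plan is to apply the two-sided concentration inequality (Theorem~\ref{thm:Schechtman}) at the value $a := \med(f(R))$, which by definition guarantees $\Pr[f(R) \leq a] \geq 1/2$, and then integrate the resulting upper tail to bound the expectation.

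Concretely, I would set $q = 2$ (the case $q=1$ gives a trivial right-hand side $1/q^k = 1$) in Theorem~\ref{thm:Schechtman}. Since $(\Pr[f(R) \leq a])^2 \geq 1/4$, this yields, for every integer $k \geq 1$,
$$ \Pr[f(R) \geq 3a + k] \leq \frac{4}{2^k}. $$
This is the only quantitative consequence of Schechtman's inequality needed for the argument.

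Next, I would express the expectation via the layer-cake formula and split the integral at $3a$:
$$ \E[f(R)] = \int_0^\infty \Pr[f(R) > t]\, dt \leq 3a + \int_0^\infty \Pr[f(R) > 3a + s]\, ds. $$
Since $\Pr[f(R) > 3a + s]$ is nonincreasing in $s$, I would upper bound the integral over each unit interval $[k, k+1)$ by $\Pr[f(R) \geq 3a + k]$, giving
$$ \int_0^\infty \Pr[f(R) > 3a + s]\, ds \leq \sum_{k=0}^\infty \Pr[f(R) \geq 3a + k] \leq 1 + \sum_{k=1}^\infty \frac{4}{2^k} = 5, $$
where the $k=0$ term is bounded trivially by $1$ and the tail uses the display above.

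Combining, $\E[f(R)] \leq 3\med(f(R)) + 5 \leq 5(\med(f(R)) + 1)$, since $\med(f(R)) \geq 0$. The only mildly delicate step is the discrete-to-continuous passage in the tail integral, but it is routine given the monotonicity of $t \mapsto \Pr[f(R) > t]$; everything else is a direct application of Theorem~\ref{thm:Schechtman}.
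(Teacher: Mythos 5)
Your proof is correct and follows essentially the same strategy as the paper's: apply Theorem~\ref{thm:Schechtman} at $a=\med(f(R))$, use $\Pr[f(R)\le a]\ge 1/2$ to extract an upper-tail bound, and sum the tail to control the expectation. The only difference is the parameterization --- you fix $q=2$ and vary $k$ over unit intervals via the layer-cake formula, whereas the paper fixes $k=q+1$ and varies $q$ over blocks of width $a+1$; your version even yields the marginally sharper bound $\E[f(R)]\le 3\med(f(R))+5$.
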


\begin{proof}
Let $a = \med(f(R))$ be the median. We apply Theorem~\ref{thm:Schechtman} with $k = q+1, q \geq 3$:
$$ \Pr[f(R) \geq (q+1)(a+1)] \leq \frac{2^q}{q^{q+1}} \leq \left( \frac{2}{3} \right)^q. $$
We can bound the expectation as follows:
$$ \E[f(R)] \leq 4(a+1) + (a+1) \sum_{q=3}^{\infty} \Pr[f(R) > (q+1)(a+1)] 
 \leq 4(a+1) + (a+1) \sum_{q=3}^{\infty} \left( \frac{2}{3} \right)^q < 5(a+1). $$
\end{proof}

Hence, we obtain the following as a corollary of Theorem~\ref{thm:Schechtman} and Lemma~\ref{lemma:median}. For convenience, we also introduce a parameter $\nu>0$ as an upper bound on singleton values.

\begin{theorem}
\label{thm:lower-tail}
Let $f:2^M \to \RR_+$ be a monotone subadditive function, where $f(\{i\}) \leq \nu$, $\nu >0$, for every $i \in M$. Then for any integers $k,q \geq 1$, and a random set $R$ where elements appear independently,
$$ \Pr\left[f(R) \leq \frac{\E[f(R)]}{5(q+1)} - \frac{(k+1) \nu}{q+1} \right] \leq \left(\frac{2}{q^k}\right)^{1/q}.$$
\end{theorem}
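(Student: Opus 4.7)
The plan is to deduce Theorem~\ref{thm:lower-tail} directly from Theorem~\ref{thm:Schechtman} and Lemma~\ref{lemma:median} by a parameter substitution. This is a short corollary rather than a new theorem, so I do not anticipate any substantive obstacle; the only bookkeeping is a normalization for arbitrary $\nu$ and a careful choice of the free parameter $a$ in Schechtman's inequality. I would begin by normalizing: apply the previously stated results to $g := f/\nu$, which is again monotone subadditive with $g(\{i\}) \le 1$. Since both $\E[f(R)] = \nu\,\E[g(R)]$ and the threshold in the desired bound scale linearly in $\nu$, it suffices to prove the statement for $\nu = 1$ and rescale at the end.

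Next I would choose $a$ in Theorem~\ref{thm:Schechtman} so that its upper-tail factor becomes at least a constant. Setting $a := (m-k)/(q+1)$ with $m := \med(f(R))$ gives $(q+1)a + k = m$, and Theorem~\ref{thm:Schechtman} specializes to
\[ \Pr[f(R) \ge m] \cdot \Pr[f(R) \le a]^{q} \le \frac{1}{q^{k}}. \]
By the definition of the median, $\Pr[f(R) \ge m] \ge 1/2$, so taking $q$-th roots yields $\Pr[f(R) \le a] \le (2/q^{k})^{1/q}$.

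Finally I would convert this median-based threshold into the expectation-based threshold of the statement using Lemma~\ref{lemma:median}, which gives $m \ge \E[f(R)]/5 - 1$. Plugging into $a = (m-k)/(q+1)$ yields
\[ a \ge \frac{\E[f(R)]}{5(q+1)} - \frac{k+1}{q+1}, \]
and monotonicity of $t \mapsto \Pr[f(R) \le t]$ passes the bound on $\Pr[f(R)\le a]$ down to this (smaller) threshold. If the chosen $a$ is non-positive then the target threshold is also non-positive and the claim is essentially vacuous since $f(R) \ge 0$; the boundary case can be handled by applying Theorem~\ref{thm:Schechtman} with a slightly larger $a$ and taking a limit. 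Unscaling by $\nu$ recovers the general statement.
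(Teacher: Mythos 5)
Your proposal is correct and follows essentially the same route as the paper's proof: normalize to $\nu=1$, apply Theorem~\ref{thm:Schechtman} with $a=(\med(f(R))-k)/(q+1)$ so that the upper-tail factor is at least $1/2$, and then convert the median threshold to the expectation threshold via Lemma~\ref{lemma:median}. Your extra remark about the degenerate case $a\le 0$ (where Schechtman's hypothesis $a>0$ fails) is a sound observation that the paper elides.
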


\begin{proof}
Assume first $g$ is a function satisfying the assumptions with $\nu=1$.
We use Theorem~\ref{thm:Schechtman} with parameter $a = (\med(g) - k) / (q+1)$.
Note that $\Pr[g(R) \geq (q+1)a + k] = \Pr[g(R) \geq \med(g)] = 1/2$. Hence, Theorem~\ref{thm:Schechtman} gives
$$ \frac{1}{2} \cdot (\Pr[g(R) \leq a])^q \leq \frac{1}{q^k}.$$
From Lemma~\ref{lemma:median}, we have $a = (\med(g) - k) / (q+1) \geq (\frac{1}{5} \E[g(R)] - 1-k) / (q+1)$ which implies the statement for $\nu=1$:
$$ \Pr\left[g(R) \leq \frac{\E[g(R)]}{5(q+1)} - \frac{k+1}{q+1} \right] \leq \left(\frac{2}{q^{k}}\right)^{1/q}.$$
For a function $f$ satisfying the assumptions for $\nu > 0$, we apply this inequality to the function $g(R) = \frac{1}{\nu} f(R)$, to obtain the statement of the lemma.
\end{proof}


\label{LastPage}
\end{document}